\newcommand{\myabstract}[0]{
A square root approach is considered for the problem of accounting for model noise
in the forecast step of the ensemble Kalman filter (EnKF) and related algorithms.
The primary aim is to replace the method of simulated, pseudo-random, additive noise
so as to eliminate the associated sampling errors.
The core method is based on the analysis step of ensemble square root filters,
and consists in the deterministic computation of a transform matrix.
The theoretical advantages regarding dynamical consistency are surveyed,
applying equally well to the square root method in the analysis step.
A fundamental problem due to the limited size of the ensemble subspace is discussed,
and novel solutions that complement the core method are suggested and studied.
Benchmarks from twin experiments with simple, low-order dynamics
indicate improved performance over standard approaches
such as additive, simulated noise and multiplicative inflation.
}
\def\abstract#1{\def\theabstract{%
\centerline{\vtop{\footnotesize\hsize 5.125in
\noindent\hskip8pt\relax #1\vskip2em}}}}
\abstract{\myabstract{}}
\def\acknowledgments{\paragraph*{Acknowledgments.}}
\long\def\appendixtitle#1{\section{#1}}
\def\topline{\hline\hline\vrule height 10pt depth4pt width0pt\relax}
\def\midline{\hline\vrule height 10pt width0pt\relax}
\def\botline{\hline}
\DeclarePairedDelimiter\abs{\lvert}{\rvert}%
\DeclarePairedDelimiter\norm{\lVert}{\rVert}%
\let\oldabs\abs
\def\abs{\@ifstar{\oldabs}{\oldabs*}}
\let\oldnorm\norm
\def\norm{\@ifstar{\oldnorm}{\oldnorm*}}
\newtheoremstyle{mystyle}
  {\topsep}   
  {\topsep}   
  {\itshape}  
  {0pt}       
  {\bfseries} 
  {.}         
  {\newline}  
	{\thmname{#1}\thmnumber{ #2}\textnormal{\thmnote{ -- #3}}} 
\theoremstyle{mystyle}
\newtheorem{theo}{Theorem}[]
\newtheorem{lemm}{Lemma}[]
\newtheorem{coro}{Corollary}[]
\definecolor{mycc}{rgb}{0, 0, 0.45}
\crefname{equation}{eqn.}{eqns.}
\Crefname{equation}{Eqn.}{Eqns.}
\crefname{subeqns}{eqns.}{eqns.}
\Crefname{subeqns}{Eqns.}{Eqns.}
\crefname{figure}{Fig.}{Figs.}%
\Crefname{figure}{Figure}{Figures}%
\crefname{equation}{eqn.}{eqns.}%
\Crefname{equation}{Equation}{Equations}%
\crefname{table}{Table}{Tables}%
\Crefname{table}{Table}{Tables}%
\crefname{theo}{Theorem}{Theorems}
\Crefname{theo}{Theorem}{Theorems}
\crefname{theo}{Theorem}{Theorems}
\Crefname{theo}{Theorem}{Theorems}
\crefname{lemm}{Lemma}{Lemmas}
\Crefname{lemm}{Lemma}{Lemmas}
\crefname{defi}{Definition}{Definitions}
\Crefname{defi}{Definition}{Definitions}
\crefname{prop}{Proposition}{Propositions}
\Crefname{prop}{Proposition}{Propositions}
\crefname{property}{Property}{Properties}
\Crefname{property}{Property}{Properties}
\crefname{coro}{Corollary}{Corollaries}
\Crefname{coro}{Corollary}{Corollaries}
\crefname{algocf}{Algorithm}{Algorithms}
\Crefname{algocf}{Algorithm}{Algorithms}
\newcommand{\NERSC}{Nansen Environmental and Remote Sensing Center}
\newcommand{\myemail}{patrick.raanes@maths.ox.ac.uk}
\title{Extending the square root method \\ to account for additive forecast noise in ensemble methods}
\author[1,2]{Patrick Nima Raanes\thanks{
	}}
\author[2]{Alberto Carrassi}
\author[2]{Laurent Bertino}
\affil[1]{Mathematical Institute, University of Oxford, OX2 6GG, UK}
\affil[2]{\NERSC, Thorm{\o}hlensgate 47, Bergen, N-5006, Norway}
\newcommand{\trsign}{{\mathrm{T}}}
\newcommand{\tr}{\ensuremath{^{\trsign}}}
\newcommand{\trsqrt}{\ensuremath{^{\trsign/2}}}
\newcommand{\pinvsign}{{+}}
\newcommand{\pinv}{\ensuremath{^\pinvsign}}
\newcommand{\pinvtr}{\ensuremath{^{\pinvsign \trsign}}}
\newcommand{\mat}[1]{{\pmb{\mathsf{#1}}}}
\newcommand{\bvec}[1]{{\bm{#1}}}
\newcommand{\x}[0]{\bvec{x}}
\newcommand{\y}[0]{\bvec{y}}
\newcommand{\q}[0]{\bvec{q}}
\newcommand{\bu}[0]{\bvec{u}}
\newcommand{\bv}[0]{\bvec{v}}
\newcommand{\vA}[0]{\bvec{v}_{A}}
\newcommand{\vB}[0]{\bvec{v}_{B}}
\renewcommand{\d}[0]{\bvec{d}}
\newcommand{\br}[0]{\bvec{r}}
\newcommand{\bmu}[0]{{\bvec{\mu}}}
\newcommand{\Q}[0]{\mat{Q}}
\newcommand{\hQ}[0]{\mat{\hat{Q}}}
\newcommand{\Qhh}[0]{\mat{\hat{Q}}{}^{1/2}}
\newcommand{\Qhtr}[0]{\mat{\hat{Q}}{}\trsqrt}
\newcommand{\QQh}[0]{[\Q - \hQ]}
\newcommand{\bPi}[0]{\mat{\Pi}}
\newcommand{\PiQ}[0]{\bPi}
\newcommand{\PiA}[0]{\bPi_{\A}}
\newcommand{\A}[0]{\mat{A}}
\newcommand{\E}[0]{\mat{E}}
\newcommand{\U}[0]{\mat{U}}
\newcommand{\V}[0]{\mat{V}}
\newcommand{\Sig}[0]{\mat{\Sigma}}
\newcommand{\Omeg}[0]{\mat{\Omega}}
\newcommand{\bL}[0]{\mat{L}}
\newcommand{\G}[0]{\mat{G}}
\newcommand{\T}[0]{\mat{T}}
\newcommand{\C}[0]{\mat{C}}
\newcommand{\bS}[0]{\mat{S}}
\newcommand{\bss}[0]{\bvec{s}}
\newcommand{\I}[0]{\mat{I}}
\newcommand{\bK}[0]{\mat{\bar{K}}}
\newcommand{\bP}[0]{\mat{P}}
\newcommand{\barP}[0]{\mat{\bar{P}}}
\newcommand{\F}[0]{\mat{F}}
\newcommand{\bH}[0]{\mat{H}}
\newcommand{\R}[0]{\mat{R}}
\newcommand{\Z}[0]{\mat{Z}}
\newcommand{\D}[0]{\mat{D}}
\newcommand{\hD}[0]{\mat{\hat{D}}}
\newcommand{\tD}[0]{\mat{\tilde{D}}}
\newcommand{\Dobs}[0]{\mat{D}_{\text{obs}}}
\newcommand{\M}[0]{\mat{M}}
\newcommand{\bXi}[0]{\mat{\Xi}}
\newcommand{\hXi}[0]{\mat{\hat{\Xi}}}
\newcommand{\tXi}[0]{\mat{\tilde{\Xi}}}
\newcommand{\bxi}[0]{\bvec{\xi}}
\newcommand{\hxi}[0]{\bvec{\hat{\xi}}}
\newcommand{\txi}[0]{\bvec{\tilde{\xi}}}
\newcommand{\iinds}[0]{i=1:m}
\newcommand{\xinds}[0]{0,1,\ldots}
\newcommand{\yinds}[0]{1,2,\ldots}
\newcommand{\tinds}[0]{t=\xinds}
\newcommand{\tindsy}[0]{t=\yinds}
\newcommand{\ninds}[0]{n=1:N}
\newcommand{\kMax}[0]{25}
\newcommand{\bx}[0]{\bvec{\bar{x}}}
\newcommand{\compactN}[0]{{N\!-\!1}}
\newcommand{\AN}[0]{{(\I_N - \ones\ones\tr/N)}}
\newcommand{\RMSE}[0]{{\text{RMSE}}}
\newcommand{\qpar}[0]{\q}
\newcommand{\qperp}[0]{\q^{\indep}}
\newcommand{\xipar}[0]{\bxi}
\newcommand{\xiperp}[0]{\bxi^{\indep}}
\newcommand{\SRD}[0]{\textsc{Sqrt-Dep}}
\newcommand{\SRA}[0]{\textsc{Sqrt-Add-Z}}
\newcommand{\SRN}[0]{\textsc{Sqrt-Core}}
\newcommand{\ADD}[0]{\textsc{Add-Q}}
\newcommand{\MULT}[0]{\textsc{Mult}-$m$}
\newcommand{\MULS}[0]{\textsc{Mult}-$1$}
\newcommand{\TSVD}[0]{\textsc{T-SVD}}
\newcommand{\SRDb}[0] {\textbf{\textsc{Sqrt-Dep}}}
\newcommand{\SRAb}[0] {\textbf{\textsc{Sqrt-Add-Z}}}
\newcommand{\SRNb}[0] {\textbf{\textsc{Sqrt-Core}}}
\newcommand{\bmr}[0]{\bvec{r}}
\newcommand{\barQ}[0]{\mat{\bar{Q}}}
\newcommand{\Lambd}[0]{\mat{\Lambda}}
\newcommand{\tn}[1]{{\textnormal{{#1}}}}
\newcommand{\Reals}{\mathbb{Re}}
\newcommand{\Imags}{i\Reals}
\newcommand{\Expect}[0]{\mathop{}\! \mathbb{E}}
\newcommand{\NormDist}{\mathcal{N}}
\newcommand{\Var}[0]{\mathop{}\! \mathbb{V}\tn{ar}}
\newcommand{\indep}[0]{\perp\!\!\!\perp} 
\DeclareMathOperator*{\diag}{diag}
\DeclareMathOperator*{\rank}{rank}
\DeclareMathOperator*{\trace}{trace}
\DeclareMathOperator*{\vspan}{span}
\newcommand{\ones}[0]{\mathds{1}}
\newcommand*\pdf{\mathop{}\! p}
\newcommand*\diff{\mathop{}\! d}
\newcommand{\dd}[2]{\frac{\diff #1}{\diff #2}}
\newcommand{\pd}[2]{\frac{\mathrm{\partial}{#1}}{\mathrm{\partial}{#2}}}
\newcommand{\dt}{\Delta t \,}
\newcommand{\dtObs}{\Delta t_{\text{obs}} \,}
\begin{document}

\twocolumn[
  \begin{@twocolumnfalse}
    \maketitle
		\theabstract
  \end{@twocolumnfalse}
]
{
  \renewcommand{\thefootnote}%
    {\fnsymbol{footnote}}
  \footnotetext[1]{\myemail \\
		Permission to place a copy of this work on this server has been provided by the
		AMS. The AMS does not guarantee that the copy provided here is an accurate copy
		of the published work.
}

\section{Introduction}
\label{sec:Introduction}

The EnKF is a popular method for doing data assimilation (DA) in the geosciences.
This study is concerned with the treatment of model noise in the EnKF forecast step.

\subsection{Relevance and scope}
\label{sec:Relevance_and_scope}
While uncertainty quantification is an important end product of any estimation procedure,
it is paramount in DA due to the sequentiality
and the need to correctly weight the observations at the next time step.
The two main sources of uncertainty in a forecast are the initial conditions and model error
\citep{slingo2011uncertainty}.
Accounting for model error is therefore essential in DA.

Model error, the discrepancy between nature and computational model,
can be due to incomplete understanding,
linearisation, truncation, sub-grid-scale processes, and numerical imprecision
\citep{nicolis2004dynamics,li2009accounting}.
For the purposes of DA, however, model error is frequently described
as a stochastic, additive, stationary, zero-centred, spatially correlated, Gaussian white noise process.
This is highly unrealistic, yet defensible in view of the multitude of unknown error sources,
the central limit theorem, and tractability \citep[][\S 3.8]{jazwinski1970stochastic}.
Another issue is that the size and complexity of geoscientific models makes it infeasible
to estimate the model error statistics to a high degree of detail and accuracy,
necessitating further reduction of its parameterisations
\citep{dee1995line}.

The model error in this study adheres to all of the above assumptions.
This, however, renders it indistinguishable from a noise process,
even from our omniscient point of view.
Thus, this study effectively also pertains to
natural noises not generally classified as model error, such as
inherent stochasticity (e.g. quantum mechanics)
and stochastic, external forcings (e.g. cosmic microwave radiation).
Therefore, while model error remains the primary motivation,
model \emph{noise} is henceforth the designation most used.
It is left to future studies to recuperate more generality by scaling back on the assumptions.

Several studies in the literature are concerned with the estimation of model error,
as well as its treatment in a DA scheme
\citep{daley1992estimating,zupanski2006model,mitchell2014carassi}.
The scope of this study is more restricted, addressing the treatment only.
To that end, it is functional to assume that the noise statistics,
namely the mean and covariance, are perfectly known.
This unrealistic assumption is therefore made,
allowing us to focus solely on the problem of incorporating
or \emph{accounting for} model noise in the EnKF.

\subsection{Model noise treatment in the EnKF}
\label{sec:Survey}

From its inception, the EnKF has explicitly considered model noise and accounted for it
in a Monte-Carlo way: adding simulated, pseudo-random noise
to the state realisations \citep{evensen1994sequential}.
A popular alternative technique is multiplicative inflation,
where the spread of the ensemble is increased by some ``inflation factor''.
Several comparisons of these techniques exist in the literature
\citep[e.g.][]{hamill2005accounting,whitaker2008ensemble,deng2011evaluation}.

Quite frequently, however, model noise is not explicitly accounted for,
but treated simultaneously with other system errors,
notably sampling error and errors in the specification of the noise statistics
\citep[][]{whitaker2004reanalysis,hunt2004four,houtekamer2005atmospheric,anderson2009spatially}.
This is because (a) inflation can be used to compensate for these system errors too,
and (b) tuning separate inflation factors seems wasteful or even infeasible.
Nevertheless, even in realistic settings, it can be rewarding to treat model error explicitly.
For example, \citet{whitaker2012evaluating} show evidence that,
in the presence of multiple sources of error,
a tuned combination of a multiplicative technique and additive noise
is superior to either technique used alone.

\Cref{sec:Alternative_approaches} discusses the EnKF model noise incorporation techniques
most relevant to this manuscript.
However, the scope of this manuscript is not to provide a full comparison
of all of the alternative under all relevant circumstances,
but to focus on the square root approach.
Techniques not considered any further here
include using more complicated stochastic parameterisations \citep{arnold2013stochastic,berry2014linear},
physics-based forcings such as stochastic kinetic energy backscatter \citep{shutts2005kinetic},
relaxation \citep{zhang2004impacts},
and boundary condition forcings.

\subsection{Framework}
\label{sec:Framework}
Suppose the state and observation, $\x^t \in \Reals^m$ and $\y^t \in \Reals^p$ respectively,
are generated by:
\begin{align}
	\label{eqn:state_sde_discrete}
	\x^{t+1} &= f(\x^t) + \q^t \, , && \tinds \, , \\
	\label{eqn:obs_eqn}
	\y^t &= \bH \x^t + \bmr^t \, , && \tindsy \, ,
\end{align}
where the Gaussian white noise processes $\{\q^t \mid \tinds \}$ and $\{\br^t \mid \tindsy \}$,
and the initial condition, $\x^0$, are specified by:
\begin{align}
	\q^t \sim \NormDist(0,\Q) \, ,
	\;
	\br^t \sim \NormDist(0,\R) \, ,
	\;
	\x^0 \sim \NormDist(\bmu^0, \bP^0) \, .
\end{align}
The observation operator, $\bH \in \Reals^{p \times m}$,
has been assumed linear because that is how it will effectively be treated anyway
\citep[through the augmentation trick of e.g.][]{anderson2001ensemble}.
The parameter $\bmu^0 \in \Reals^m$ is assumed known, as are the
symmetric, positive-definite (SPD) covariance matrices $\bP^0, \Q \in \Reals^{m^2}$ and
$\R \in \Reals^{p^2}$.
Generalisation to time-dependent $\Q,\R,f$, and $\bH$ is straightforward.

Consider $\pdf(\x^t \mid \y^{1:t})$,
the Bayesian probability distribution of $\x^t$
conditioned on all of the previous observations, $\y^{1:t}$,
where the colon indicates an integer sequence.
The recursive filtering process is usually broken into two steps:
the forecast step, whose output is denoted by the superscript $f$, and the analysis step,
whose output is denoted using the superscript $a$.
Accordingly, the first and second moments of the distributions are denoted
\begin{alignat}{2}
	\label{eqn:KF_notation_P}
	\x^f &= \Expect(\x^t|\y^{1:t-1}) \, ,  &\qquad \bP^f &= \Var(\x^t|\y^{1:t-1}) \, , \\
	\label{eqn:KF_notation_x}
	\x^a &= \Expect(\x^t|\y^{1:t}) \, ,    &\qquad \bP^a &= \Var(\x^t|\y^{1:t}) \, ,
\end{alignat}
where $\Expect(.)$ and $\Var(.)$ are the (multivariate)
expectation and variance operators.
In the linear-Gaussian case, these characterise $\pdf(\x^{t} \mid \y^{1:t-1})$
and $\pdf(\x^t \mid \y^{1:t})$,
and are given, recursively in time for sequentially increasing indices, $t$,
by the Kalman filter equations.

The EnKF is an algorithm to approximately sample ensembles, $\x_{1:N} = \{\x_n \mid \ninds\}$,
from these distributions.
Note that the positive integer $N$ is used to denote ensemble size,
while $m$ and $p$ have been used to denote state and observation vector lengths.
For convenience, all of the state realisations are assembled into the ``ensemble matrix'':
\begin{align}
	\label{eqn:Ea_3}
	\E &= \begin{bmatrix}
		\x_1, & \ldots & \x_n, & \ldots & \x_N
	\end{bmatrix} \, .
\end{align}
A related matrix is that of the ``anomalies'':
\begin{align}
	\A &= \E (\I_N - \bPi_{\ones}) = \E \AN \, ,
\end{align}
where $\ones \in \Reals^{N}$ is the column vector of ones, $\ones\tr$ is its transpose,
and the matrix $\I_N$ is the $N$-by-$N$ identity.
The conventional estimators serve as ensemble counterparts
to the exact first and second order moments of \cref{eqn:KF_notation_x,eqn:KF_notation_P},
\begin{alignat}{2}
	\label{eqn:EnKF_notation_f}
	\bx^f &= \frac{1}{N} \E^f \ones \, ,  &\qquad   \barP^f &= \frac{1}{N-1} \A^f {\A^f}\tr \, , \\
	\label{eqn:EnKF_notation_a}
	\bx^a &= \frac{1}{N} \E^a \ones \, ,  &\qquad   \barP^a &= \frac{1}{N-1} \A^a {\A^a}\tr \, ,
\end{alignat}
where, again, the superscripts indicate the conditioning.
Furthermore, $\A$ (without any superscript)
is henceforth used to refer to the anomalies at an
intermediate stage in the forecast step, before model noise incorporation.
In summary, the superscript usage of the EnKF cycle is illustrated by
\newcommand{\fsteps}{
	\A^a
	\xrightarrow[\text{\cref{eqn:A_pure}}]{\text{Model integration,}}
	\A
	\xrightarrow[\text{incorporation}]{\text{Model noise}}
}
\newcommand{\asteps}{
	\A^f
	\xrightarrow[\text{eqns. (\ref{eqn:xa_2},\ref{eqn:A_analysis_4})}]{\text{Analysis}}
	\A^a
}
\begin{align*}
	\rlap{$
		\overbrace{
			\phantom{\fsteps \A^f_{t}}}^\text{Forecast step}
		$}
		\fsteps
		\underbrace{
			\asteps
		}_{\text{Analysis step}}
\end{align*}
Although the first $\A^a$ of the diagram is associated with the time step before
that of $\A$, $\A^f$, and the latter $\A^a$, this
ambiguity becomes moot by focusing on the analysis step and the forecast step separately.

\subsection{Layout}
\label{sec:Layout}
The proposed methods to account for model noise builds on the square root method of the analysis step,
which is described in \cref{sec:analysis_sqrt}.
The core of the proposed methods is then set forth in \cref{sec:forecast_sqrt}.
Properties of both methods are analysed in \cref{sec:Properties}. 
Alternative techniques, against which the proposed method is compared,
are outlined in \cref{sec:Alternative_approaches}.
Based on these alternatives, \cref{sec:res_noise} introduces methods
to account for the residual noise resulting from the core method.
It therefore connects to, and completes, \cref{sec:forecast_sqrt}.
The set-up and results of numerical experiments are given in
\cref{sec:Setup} and \cref{sec:Results}.
A summary is provided, along with final discussions, in \cref{sec:Discussion}.
The appendices provide additional details on the properties
of the proposed square root methods.

\section{The square root method in the analysis step}
\label{sec:analysis_sqrt}
Before introducing the square root method for the EnKF forecast step, which accounts for model noise,
we here briefly discuss the square root method in the analysis step.

\subsection{Motivation}
\label{sec:Sqrt_analysis_motivation}

It is desirable that
$\barP^{a/f} = \bP^{a/f}$ and $\bx^{a/f} = \x^{a/f}$
throughout the DA process.
This means that the Kalman filter equations,
with the ensemble estimates swapped in,
\begin{align}
	\label{eqn:K_1}
	\bK &= \barP^f \bH\tr (\bH \barP^f \bH\tr +\R)^{-1} \, ,
	\\
	\label{eqn:xa_1}
	\bx^a &= \bx^f + \bK [\y - \bH \bx^f] \, ,
	\\
	\label{eqn:Pa_1}
	\barP^a &= [\I_m-\bK \bH] \barP^f \, ,
\end{align}
should be satisfied by $\E^a$ from the analysis update.

Let $\Dobs \in \Reals^{p \times N}$ be a matrix whose columns
are drawn independently from $\NormDist(0,\R)$.
Unfortunately, the perturbed observations analysis update \citep{burgers1998analysis},
\begin{align}
	\label{eqn:PertObsEnKF}
	\E^a &= \E^f + \bK \left\{\y\ones\tr + \Dobs - \bH \E^f\right\} \, ,
\end{align}
only yields the intended covariance, \cref{eqn:Pa_1}, on average:
\begin{align}
	\Expect (\barP^a) &= [ \I_m-\bK \bH ] \barP^f \, ,
	\label{eqn:Pa_8}
\end{align}
where the expectation, $\Expect$, is taken with respect to $\Dobs$.

\subsection{Method}
\label{sec:Analysis_SQRT_Method}

On the other hand, the square root analysis update satisfies \cref{eqn:Pa_1} exactly.
Originally introduced to the EnKF by \citet{bishop2001adaptive},
the square root analysis approach was soon connected to
classic square root Kalman filters \citep{tippett2003ensemble}.
But while the primary intention of classic square root Kalman filters
was to improve on the numerical stability of the Kalman filter \citep{anderson1979},
the main purpose of the square root EnKF was rather to eliminate
the stochasticity and the accompanying sampling errors of the perturbed-observations analysis update
(\ref{eqn:PertObsEnKF}).

Assume that $p \leq m$, or that $\R$ is diagonal, or that $\R^{-1/2}$ is readily computed.
Then, both for notational and computational \citep{hunt2007efficient} simplicity,
let
\begin{alignat}{2}
	\label{eqn:def_S1}
	\bS &= \R^{-1/2} (\bH \A^f) /\sqrt{N-1} & &\in \Reals^{p \times N} \, ,
	\\
	\label{eqn:def_s2}
	\bss &= \R^{-1/2} [\y - \bH \bx^f] /\sqrt{N-1} & &\in \Reals^{p} \, ,
\end{alignat}
denote the ``normalised''
anomalies and mean innovation of the ensemble of observations.
Recalling \cref{eqn:EnKF_notation_a}
it can then be shown that \cref{eqn:K_1,eqn:xa_1,eqn:Pa_1} are satisfied if:
\begin{align}
	\label{eqn:xa_2}
	\bx^a &= \bx^f + \A^f \G^a \bS\tr \bss \, ,
	\\
	\label{eqn:Pa_2}
	\A^a {\A^a}\tr
	&=
	\A^f \G^a {\A^f}\tr \, ,
\end{align}
where the two forms of $\G^a$,
\begin{align}
	\label{eqn:Pa_5}
	\G^a
	&=
	\I_N - \bS\tr (\bS \bS\tr + \I_p )^{-1} \bS
	\\
	\label{eqn:Pa_6}
	&=
	(\bS\tr \bS + \I_N)^{-1} \, ,
\end{align}
are linked through the Woodbury identity \citep[e.g.][]{wunsch2006discrete}.
Therefore, if $\A^a$ is computed by
\begin{align}
	\A^a = \A^f \T^a \, ,
	\label{eqn:A_analysis_4}
\end{align}
with $\T^a$ being a matrix square root of $\G^a$,
then $\A^a$ satisfies \cref{eqn:Pa_1} exactly.
Moreover, ``square root update'' is henceforth the term used to refer to
any update of the anomalies through the right-multiplication of a transform matrix,
as in \cref{eqn:A_analysis_4}.
The ensemble is obtained by recombining
the anomalies and the mean:
\begin{align}
       \E^a &= \bx^a \ones\tr + \A^a \, .
       \label{eqn:recombine_x_A}
\end{align}

\subsection{The symmetric square root}
\label{sec:Choice_of_square_root}

\Cref{eqn:Pa_6} implies that $\G^a$ is SPD.
The matrix $\T^a$ is a square root of $\G^a$ if it satisfies
\begin{align}
	\G^a = \T^a { \T^a }\tr \, .
	\label{eqn:T_a_0}
\end{align}
However, by substitution into \cref{eqn:T_a_0} it is clear that
$\T^a \Omeg$ is also a square root of $\G^a$,
for any orthogonal matrix $\Omeg$.
There are therefore infinitely many square roots.
Nevertheless, some have properties that make them unique.
For example, the Cholesky factor is unique as the only triangular square root 
with positive diagonal entries.

Here, however, the square root of most interest
is the symmetric one, $\T^a_s = \V \Sig^{1/2} \V\tr$.
Here, $\V \Sig \V\tr = \G^a $ is an eigendecomposition of $\G^a$,
and $\Sig^{1/2}$ is defined as the entry-wise
\emph{positive} square root of $\Sig$ \citep[][Th. 7.2.6]{horn2012matrix}.
Its existence follows from the spectral theorem,
and its uniqueness from that of the eigendecomposition.
Note its distinction by the $s$ subscript.

It has been gradually discovered that the symmetric square root choice
has several advantageous properties for its use in \cref{eqn:A_analysis_4},
one of which is that the it does not affect the ensemble mean \citep[e.g.][]{wang2003comparison,evensen2009ensemble},
which is updated by \cref{eqn:xa_2} apart from the anomalies.
Further advantages are surveyed in \cref{sec:Properties},
providing strong justification for choosing the symmetric square root,
and strong motivation to extend the square root approach to the forecast step.

\section{The square root method in the forecast step}
\label{sec:forecast_sqrt}
\Cref{sec:analysis_sqrt} reviewed the square root update method for the analysis step of the EnKF.
In view of its improvements over the Monte-Carlo method,
it is expected that a similar scheme for incorporating the model noise into the forecast ensemble, $\E^f$,
would be beneficial.
\Cref{sec:SR_nores} derives such a scheme: \SRN{}.
First, however, \cref{sec:Forecast_sampling_errors} illuminates the motivation:
forecast step sampling error.

\subsection{Forecast sampling errors in the classic EnKF}
\label{sec:Forecast_sampling_errors}
Assume linear dynamics, $f: \x \mapsto f(\x) = \F \x$, for ease of illustration.
The Monte-Carlo simulation of \cref{eqn:state_sde_discrete} can be written
\begin{align}
	\label{eqn:ens_evo}
	\E^f &= \F \E^a + \D \, ,
\end{align}
where the columns of $\D$ are drawn from $\NormDist(0,\Q)$ by
\begin{align}
	\label{eqn:add_infl_D}
	\D &= \Q^{1/2} \bXi \, ,
\end{align}
where $\bXi = \begin{bmatrix}
       \bxi_1, & \ldots & \bxi_n, & \ldots  & \bxi_N
\end{bmatrix}$,
and each $\bxi_n$ is independently drawn from $\NormDist(0,\I_m)$.
Note that different choices of the square root, say $\Q^{1/2}$ and $\Q^{1/2} \Omeg$,
yield equally-distributed random variables, $\Q^{1/2}\bxi$ and $\Q^{1/2} \Omeg \bxi$.
Therefore the choice does not matter, and is left unspecified.
It is typical to eliminate sampling error of the first order
by centering the model noise perturbations so that $\D \ones = 0$.
This introduces dependence between the samples and reduces the variance.
The latter is compensated for by rescaling by a factor of $\sqrt{N/(N-1)}$.
The result is that
\begin{align}
	\barP^f
	\label{eqn:bP3}
	&= \F \barP^a \F\tr + \Q \\
	 + &(\barQ - \Q)
	 - \frac{1}{N-1} \left( \F \A^a \D\tr
	+ \D (\F \A^a)\tr \right) \, , \notag
\end{align}
as per \cref{eqn:EnKF_notation_f},
where $\barQ = (N-1)^{-1} \D \D\tr$.
But, for the same reasons as for the analysis step, ideally:
\begin{align}
	\label{eqn:Pf1}
	\barP^f &= \F \barP^a \F\tr + \Q \, .
\end{align}
Thus, the second line of \cref{eqn:bP3} constitutes a stochastic discrepancy
from the desired relations (\ref{eqn:Pf1}).

\subsection{The square root method for model noise -- \SRNb{}}
\label{sec:SR_nores}

As illustrated in \cref{sec:Framework},
define $\A$ as the anomalies of the propagated ensemble before noise incorporation:
\begin{align}
	\label{eqn:A_pure}
	\A &= f(\E^a) \AN \, ,
\end{align}
where $f$ is applied column-wise to $\E^a$.
Then the desired relation (\ref{eqn:Pf1}) is satisfied if $\A^f$ satisfies:
\begin{align}
	\label{eqn:AfAf_PQ}
	\A^f{\A^f}\tr
	&= \A \A\tr + (N-1) \Q \, .
\end{align}
However, $\A^f$ can only have $N$ columns. Thus, the problem of finding an $\A^f$ that satisfies \cref{eqn:AfAf_PQ}
is ill-posed, since the right hand side of \cref{eqn:AfAf_PQ} is of rank $m$ for arbitrary, full-rank $\Q$,
while the left hand side is of rank $N$ or less.

Therefore, let $\A\pinv$ be the Moore-Penrose pseudoinverse of $\A$,
denote $\PiA = \A \A\pinv$ the orthogonal projector onto the column space of $\A$,
and define $\hQ = \PiA \Q \PiA$ the ``two-sided'' projection of $\Q$.
Note that the orthogonality of the projector, $\PiA$, induces its symmetry.
Instead of \cref{eqn:AfAf_PQ}, the core square root model noise incorporation method
proposed here, \SRN{}, only aims to satisfy
\begin{align}
	\label{eqn:AQhat_cond}
	\A^f{\A^f}\tr
	&= \A \A\tr +  (N-1) \hQ \, .
\end{align}
By virtue of the projection, \cref{eqn:AQhat_cond} can be written as
\begin{align}
	\label{eqn:T4}
	\G^f &= \I_N  +  (N-1) \A\pinv \Q (\A\pinv)\tr \, , \\
	\label{eqn:AGA_f}
	\A^f{\A^f}\tr
	&= \A \G^f \A\tr \, .
\end{align}
Thus, with $\T^f$ being a square root of $\G^f$, the update
\begin{align}
	\A^f = \A \T^f
	\label{eqn:f_sqrt_step}
\end{align}
accounts for the component of the noise quantified by $\hQ$.
The difference between the right hand sides of \cref{eqn:AfAf_PQ,eqn:AQhat_cond},
$(\compactN) \QQh$, is henceforth referred to as the ``residual noise'' covariance matrix.
Accounting for it is not trivial. This discussion is resumed in \cref{sec:res_noise}.

As for the analysis step, we choose to use the symmetric square root, $\T^f_s$, of $\G^f$.
Note that \emph{two} SVDs are required to perform this step: one to calculate $\A\pinv$,
and one to calculate the symmetric square root of $\G^f$. Fortunately, both are
relatively computationally inexpensive,
needing only to calculate $\compactN$ singular values and vectors.
For later use, define the square root ``additive equivalent'':
\begin{align}
	\hD &= \A^f - \A = \A [\T^f_s - \I_N] \, .
	\label{eqn:Dh_def}
\end{align}

\subsection{Preservation of the mean}
\label{sec:Preservation_of_the_mean}
The square root update is a deterministic scheme
that satisfies the covariance update relations exactly (in the space of $\A$).
But in updating the anomalies, the mean should remain the same.
For \SRN{}, this can be shown to hold true in the same way as
\citet{livings2008unbiased} did for the analysis step,
with the addition of \cref{eqn:Apinv_12}.
\begin{theo}[Mean preservation] If $\A^f = \A \T^f_s$, then
	\label{theo:pres_mean}
	\begin{align}
		\A^f \ones = 0 \, .
		\label{eqn:Af_ones_0}
	\end{align}
	I.e. the symmetric square root choice for the model noise transform matrix preserves the ensemble mean.
\end{theo}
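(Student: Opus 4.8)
The plan is to exploit the fact that $\ones$ is (essentially) an eigenvector of $\G^f$, and that the symmetric square root of an SPD matrix shares its eigenvectors, so $\ones$ is an eigenvector of $\T^f_s$ as well; then $\A \T^f_s \ones$ is just a scalar multiple of $\A\ones$, which vanishes. First I would establish the key fact that $\A \ones = 0$. This holds because $\A$ is defined via the centering matrix $\AN$, and $\AN \ones = \ones - \ones(\ones\tr\ones)/N = \ones - \ones = 0$; hence $\A \ones = f(\E^a)\AN\ones = 0$. This is the analogue of the observation used by \citet{livings2008unbiased} for the analysis step.

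Next I would show $\G^f \ones = \ones$. From \cref{eqn:T4}, $\G^f \ones = \ones + (N-1)\A\pinv \Q (\A\pinv)\tr \ones$, so it suffices to show $(\A\pinv)\tr \ones = 0$, equivalently $\A\pinv{}\tr \ones = 0$. Now $\A\pinv$ is the Moore–Penrose pseudoinverse, and one of the defining identities is $\A\pinv \A \A\pinv = \A\pinv$; combined with $\A\ones = 0$ this is not quite enough directly, so instead I would use $(\A\pinv)\tr = (\A\tr)\pinv$ together with the fact that the range of $(\A\tr)\pinv$ equals the range of $\A$. Since $\ones \in \vnull(\A)$ — wait, more carefully: $\A\pinv{}\tr = (\A\tr)\pinv$ and its column space coincides with the row space of $\A\tr$, i.e. the column space of $\A$; but that gives information about where $\A\pinv{}\tr$ maps \emph{to}, not its kernel. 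The cleaner route: $\A\pinv = (\A\tr\A)\pinv\A\tr$ on the relevant subspace, but most transparently, use that $\A\pinv{}\tr \ones$ lies in $\range(\A)$ and that $\A\tr \A\pinv{}\tr \ones = (\A\pinv \A)\tr \ones = \PiA\ones$... the equation \cref{eqn:Apinv_12} flagged in the text (``with the addition of \cref{eqn:Apinv_12}'') is presumably exactly the identity $\A\pinv{}\tr\ones = 0$ or $\A\pinv\A\ones=0$ needed here, so I would invoke or quickly derive it: since $\ones\in\vnull(\A)$ and $\range(\A\pinv)=\range(\A\tr)=\vnull(\A)^{\perp}$, the vector $\A\pinv{}\tr\ones$... honestly the crispest argument is $\A\pinv{}\tr\ones = \A\pinv{}\tr \PiA^{\perp}\ones + \A\pinv{}\tr\PiA\ones$ and use $\A\pinv{}\tr = \A\pinv{}\tr\A\tr\A\pinv{}\tr = \PiA\,\A\pinv{}\tr$ so that $\A\pinv{}\tr\ones = \PiA \A\pinv{}\tr\ones$, combined with $\ones\perp\range(\A)$ being false in general — so one genuinely needs the pseudoinverse identity $\A\tr\A\pinv{}\tr\ones = \PiA\ones$ and the null-space structure. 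I expect \emph{this} step — pinning down $(\A\pinv)\tr\ones = 0$ from $\A\ones = 0$ via the Moore–Penrose axioms — to be the main obstacle, though it is a short lemma once the right identity is selected.

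Granting $\G^f \ones = \ones$, the conclusion is immediate. Since $\G^f$ is SPD (from \cref{eqn:T4}, $\G^f = \I_N + (N-1)\A\pinv\Q\A\pinv{}\tr$ with $\Q$ SPD), write its eigendecomposition $\G^f = \V\Sig\V\tr$; then $\ones$ is an eigenvector with eigenvalue $1$, so $\Sig^{1/2}\V\tr\ones = \V\tr\ones$ (the positive square root of the eigenvalue $1$ is $1$), whence $\T^f_s\ones = \V\Sig^{1/2}\V\tr\ones = \V\V\tr\ones = \ones$. Therefore
\begin{align}
	\A^f \ones = \A \T^f_s \ones = \A \ones = 0 \, ,
\end{align}
which is \cref{eqn:Af_ones_0}. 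Finally I would remark that this argument is structurally identical to the analysis-step proof of \citet{livings2008unbiased}: there $\G^a\ones = \ones$ because $\bS\ones = 0$ (the normalised anomalies are centered), here $\G^f\ones=\ones$ because $\A\ones = 0$ kills the rank-correction term; the symmetric square root is the unique choice that inherits the eigenvector $\ones$ and hence the mean preservation.
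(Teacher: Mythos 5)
Your proposal follows exactly the same route as the paper's proof: establish $\A\ones = 0$ from the centering in \cref{eqn:A_pure}, deduce $\G^f\ones = \ones$, and conclude $\T^f_s\ones = \ones$ because the symmetric square root inherits the eigenvectors of $\G^f$ (your eigendecomposition argument for this last step is fine, and if anything more explicit than the paper's one-line appeal to shared eigenvectors). The one step you leave dangling --- justifying $(\A\pinv)\tr\ones = 0$ --- is precisely what \cref{eqn:Apinv_12} supplies: the Moore--Penrose identity $\A\pinv = \A\tr(\A\A\tr)\pinv$ \citep[\S 1.6]{ben2003generalized}. Transposing it, and using the symmetry of $(\A\A\tr)\pinv$, gives $(\A\pinv)\tr = (\A\A\tr)\pinv\A$, so $(\A\pinv)\tr\ones = (\A\A\tr)\pinv\A\ones = 0$ and hence $\G^f\ones = \ones + (N-1)\A\pinv\Q(\A\A\tr)\pinv\A\ones = \ones$, which is exactly how the paper argues. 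Your detours would not have closed this on their own: knowing $\range((\A\tr)\pinv) = \range(\A)$ constrains where the map sends vectors, not its kernel, and the other identity you reach for, $\A\pinv = (\A\tr\A)\pinv\A\tr$, transposes to $(\A\pinv)\tr = \A(\A\tr\A)\pinv$, which does not visibly annihilate $\ones$. If you prefer a kernel argument to the paper's, the clean statement is $\vnull((\A\tr)\pinv) = \range(\A\tr)^{\perp} = \vnull(\A) \ni \ones$, together with $(\A\pinv)\tr = (\A\tr)\pinv$. So the strategy is correct and identical to the paper's; the only deficiency is that the key lemma was conjectured rather than stated and proved, and it is the one-liner above.
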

\begin{proof}
	For any matrix $\A$,
	\begin{align}
		\A\pinv &= \A\tr (\A \A\tr)\pinv \, ,
		\label{eqn:Apinv_12}
	\end{align}
	\citep[][\S 1.6]{ben2003generalized}.
	Thus, 
	\begin{align}
		\G^f \ones
		\label{eqn:pres_mean_3}
		&= \ones + (N-1) \A\pinv \Q (\A \A\tr)\pinv \A \ones
		= \ones \, ,
	\end{align}
	as per \cref{eqn:A_pure}.
	But the eigenvectors of the square of a diagonalisable matrix are the same as for the original matrix,
	with squared eigenvalues.
	Thus \cref{eqn:pres_mean_3} implies $\A^f \ones = \A \T^f_s \ones = \A \ones = 0$.
\end{proof}

\section{Dynamical consistency of square root updates}
\label{sec:Properties}
Many dynamical systems embody ``balances'' or constraints on the state space \citep{van2009particle}.
For reasons of complexity and efficiency 
these concerns are often not encoded in the prior \citep{wang2015alleviating}.
They are therefore not considered by the statistical updates,
resulting in state realisations that are inadmissible because of a
lack of dynamical consistency or physical feasibility.
Typical consequence of breaking such constraints include unbounded growth (``blow up''), 
exemplified by the quasi-geostrophic model of \citet{sakov2008deterministic},
or failure of the model to converge, exemplified by reservoir simulators \citep{chen2013levenberg}.

This section provides a formal review of the properties of the square root update
as regards dynamical consistency, presenting theoretical support for the square root method.
The discussion concerns any square root update,
and is therefore relevant for the square root method in the analysis step
as well as for \SRN{}.

\subsection{Affine subspace confinement}
\label{sec:Linearity}
The fact that the square root update $\A \mapsto \A \T$ is a right-multiplication means that
each column of the updated anomalies is a linear combination of the original anomalies.
On the other hand, $\T$ itself depends on $\A$.
In recognition of these two aspects, \citet{evensen2003ensemble} called
such an update a ``weakly nonlinear combination''.
However, our preference is to describe the update as
confined to the affine subspace of the original ensemble,
that is the affine space $\bx + \vspan(\A)$.

\subsection{Satisfying equality constraints}
\label{sec:Constraints}
It seems reasonable to assume that the updated ensemble, being in the space of the original one,
stands a fair chance of being dynamically consistent.
However, if consistency can be described as equality constraints,
then discussions thereof can be made much more formal and specific,
as is the purpose of this subsection.
In so doing, it uncovers a couple of interesting, hitherto unnoticed advantage of the symmetric square root choice.

Suppose the original ensemble, $\x_{1:N}$, or $\E$, satisfies $\C \x_n = \d$ for all $\ninds$, i.e.
\begin{align}
	\label{eqn:lin_constraints}
	\C \E &= \d \ones\tr \, .
\end{align}
One example is conservation of mass,
in which case the state, $\x$, would contain grid-block densities,
while the constraint coefficients, $\C$,
would be a row vector of the corresponding volumes,
and $\d$ would be the total mass.
Another example is geostrophic balance \citep[e.g.][]{hoang2005adaptive},
in which case $\x$ would hold horizontal velocity components and sea surface heights,
while $\C$ would concatenate the identity and a discretised horizontal differentiation operator,
and $\d$ would be zero.

The constraints (\ref{eqn:lin_constraints}) should hold also after the update.
Visibly, if $\d$ is zero, any right-multiplication of $\E$,
i.e. any combination of its columns, will also satisfy the constraints.
This provides formal justification for the proposition of \citet{evensen2003ensemble},
that the ``linearity'' of the EnKF update implicitly ensures respecting linear constraints.

One can also write
\begin{align}
	\label{eqn:lin_constraint_mean_3}
	\C \bx &= \d \, , \\
	\label{eqn:lin_constraints_7}
	\C \A  &= 0 \ones\tr \, ,
\end{align}
implying (\ref{eqn:lin_constraints}) provided
$\E = \bx \ones\tr + \A$ holds.
\Cref{eqn:lin_constraint_mean_3,eqn:lin_constraints_7} show that
the ensemble mean and anomalies can be thought of as particular and homogeneous solutions to the constraints.
They also indicate that in a square root update, even if $\d$ is not zero,
one only needs to ensure that the mean constraints are satisfied,
because the homogeneity of \cref{eqn:lin_constraints_7} means that
any right-multiplying update to $\A$ will satisfy the anomaly constraints.
However, as mentioned above, unless it preserves the mean, it might perturb \cref{eqn:lin_constraint_mean_3}.
A corollary of \cref{theo:pres_mean} is therefore that the symmetric choice for the square root update
also satisfies inhomogeneous constraints.

Finally, in the case of nonlinear constraints, e.g. $\mathscr{C}(\x_n) = \d$,
truncating the Taylor expansion of $\mathscr{C}$ yields
\begin{align}
	\C \A \approx [\d - \mathscr{C}(\bx)]\ones\tr \, ,
	\label{eqn:nonlin-constraint_8}
\end{align}
where $\C = \pd{\mathscr{C}}{\x}(\bx)$.
Contrary to \cref{eqn:lin_constraints_7}, the approximate constraints of \cref{eqn:nonlin-constraint_8},
are not homogeneous, and therefore not satisfied by any right-multiplying update.
Again, however, by \cref{theo:pres_mean}, the symmetric square root appears
an advantageous choice, because it has $\ones$ as an eigenvector with eigenvalue 1,
and therefore satisfies the (approximate) constraints.

\subsection{Optimality of the symmetric choice}
\label{sec:Optimal_transport}

A number of related properties on the optimality of the symmetric square root exist scattered in the literature.
However, to the best of our knowledge, these have yet to be reunited into a unified discussion.
Similarly, considerations on their implications on DA have so far not been collected.
These are the aims of this subsection.

\begin{theo}[Minimal ensemble displacement]
	\label{theo:T_sym_opt}
	Consider the ensemble anomalies $\A$ with ensemble covariance matrix $\barP$,
	and let $\q_n$ be column $n$ of
	$\D = \A \T - \A$: the displacement of the $n$-th anomaly through a square root update.
	The symmetric square root, $\T_s$, minimises
	\begin{align}
		J(\T)
		\label{eqn:J_ott_1}
		&= \frac{1}{N-1} \sum_{n}^{} \norm{\q_n}_{\barP}^2 \\
		&= \trace \left( \left[ \A \T - \A \right]\tr (\A \A\tr)\pinv \left[ \A \T - \A \right] \right)
		\label{eqn:J_ott_3}
	\end{align}
	among all $\T \in \Reals^{N^2}$ such that $\A \T \T\tr \A\tr = \A \G \A\tr$, for some SPD matrix $\G$.
	\Cref{eqn:J_ott_3} coincides with \cref{eqn:J_ott_1} if $\barP^{-1}$ exists,
	but is also valid if not.
\end{theo}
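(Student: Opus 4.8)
The plan is to recast the constrained minimisation as an optimisation over the orthogonal "gauge" freedom in the square root, and then recognise the objective as a trace that is extremised by the symmetric choice. First I would establish the equivalence of \cref{eqn:J_ott_1,eqn:J_ott_3}: expanding $\norm{\q_n}_{\barP}^2 = \q_n\tr \barP^{-1} \q_n$, summing over $n$, and using $\barP = \A\A\tr/(N-1)$ together with $\sum_n \q_n \q_n\tr = \D\D\tr = [\A\T-\A][\A\T-\A]\tr$, the sum becomes $\trace([\A\T-\A]\tr(\A\A\tr)^{-1}[\A\T-\A])$; replacing the inverse by the pseudoinverse gives \cref{eqn:J_ott_3}, which is well-defined regardless of invertibility and agrees with \cref{eqn:J_ott_1} whenever $\barP^{-1}$ exists. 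I would note that $(\A\A\tr)\pinv = (\A\pinv)\tr \A\pinv$, so the objective only sees the component of $\A\T-\A$ in the column space of $\A$.

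Next I would parametrise the feasible set. Any $\T$ with $\A\T\T\tr\A\tr = \A\G\A\tr$ for SPD $\G$ can, after projecting onto $\range(\A\tr)$, be written (modulo the null space of $\A$, which the objective ignores) as $\T = \T_s \Omeg$ for some orthogonal $\Omeg$, where $\T_s$ is the symmetric square root of $\G^f$ — this is exactly the non-uniqueness already noted around \cref{eqn:T_a_0}. Here the relevant $\G$ is fixed to be $\G^f$ of \cref{eqn:T4}, since the covariance constraint $\A\T\T\tr\A\tr$ is what pins down $\G$ up to the action of $\A$; I would be slightly careful that $\G$ is only determined on $\range(\A\tr)$, but since the objective also only depends on that component, it suffices to optimise over the block acting there. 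Substituting $\T = \T_s\Omeg$ and expanding \cref{eqn:J_ott_3}:
\begin{align}
	J(\T_s\Omeg)
	&= \trace\!\big( \Omeg\tr \T_s\tr \A\tr (\A\A\tr)\pinv \A \T_s \Omeg \big)
	- 2\trace\!\big( \Omeg\tr \T_s\tr \A\tr(\A\A\tr)\pinv \A \big) + \const \, .
	\label{eqn:J_plan_expand}
\end{align}
The first trace is cyclic-invariant under $\Omeg$, hence constant on the feasible set; only the middle cross term depends on $\Omeg$, so minimising $J$ amounts to \emph{maximising} $\trace(\Omeg\tr \bM)$ where $\bM = \T_s\tr\A\tr(\A\A\tr)\pinv\A = \PiA$-conjugated version of $\T_s$, a symmetric positive-semidefinite matrix (using $\T_s = \T_s\tr$ and $\A\tr(\A\A\tr)\pinv\A$ a projector commuting appropriately).

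The main obstacle — and the crux of the argument — is the standard but essential fact that for a symmetric positive-semidefinite matrix $\bM$, $\trace(\Omeg\tr\bM)$ over orthogonal $\Omeg$ is maximised at $\Omeg = \I$, with the maximum equal to $\trace(\bM)$; equivalently, $\trace(\Omeg\tr\bM) \le \sum_i \sigma_i(\bM) = \trace(\bM)$ by von Neumann's trace inequality, with equality iff $\Omeg$ fixes $\bM$. I would prove this by writing the eigendecomposition $\bM = \V\Lambd\V\tr$ with $\Lambd \ge 0$, so $\trace(\Omeg\tr\bM) = \sum_i \Lambd_{ii} (\V\tr\Omeg\V)_{ii} \le \sum_i \Lambd_{ii}$ since each diagonal entry of an orthogonal matrix is at most $1$; equality forces $\V\tr\Omeg\V = \I$ on the support of $\Lambd$, i.e. $\Omeg = \I$ there. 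This pins down $\Omeg = \I$ on $\range(\A\tr)$, i.e. $\T = \T_s$ restricted to the relevant subspace, which is all the objective cares about — giving the claimed minimality of the symmetric square root. I would close by remarking that this recovers and unifies the optimal-transport / minimal-displacement observations attributed in the literature to the symmetric choice, now phrased uniformly for both the analysis step and \SRN{}.
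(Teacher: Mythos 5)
You should know at the outset that the paper contains no proof of \cref{theo:T_sym_opt}: it defers entirely to \citet{ott2004local}, and to \citet{hunt2007efficient} for the restatement as the constrained minimum of the Frobenius norm of $\T-\I_N$. Your Procrustes-style plan -- use $\A\tr(\A\A\tr)\pinv\A=\A\pinv\A$ so that $J(\T)=\norm{\A\pinv\A\,(\T-\I_N)}_F^2$, parametrise the feasible transforms as $\T_s\Omeg$ up to components annihilated by $\A$, and maximise the remaining cross trace over orthogonal $\Omeg$ via the diagonal-entry (von Neumann) bound -- is exactly the kind of argument underlying those references, and the equivalence of \cref{eqn:J_ott_1,eqn:J_ott_3} and the trace inequality itself are handled correctly.

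There is, however, a genuine gap hidden in the words ``commuting appropriately''. Your argument needs $\T_s\tr\A\tr(\A\A\tr)\pinv\A=\T_s\,(\A\pinv\A)$ to be symmetric positive semi-definite, which requires $\T_s$ to \emph{commute} with the projector $\A\pinv\A$; symmetry of $\T_s$ and of the projector separately is not enough, since a product of two symmetric PSD matrices is symmetric only when they commute. This commutation is precisely where the special structure of $\G$ must enter: $\G^f-\I_N=(N-1)\A\pinv\Q(\A\pinv)\tr$ has row and column spaces inside $\range(\A\tr)$, so $\G^f$, and hence its symmetric square root $\T^f_s$, commutes with $\A\pinv\A$; the same holds for the analysis matrix $\G^a$ of \cref{eqn:Pa_6} because $\bS\tr\bS$ has the same support. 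For a generic SPD $\G$ not commuting with $\A\pinv\A$ the step genuinely fails: the maximiser of $\trace(\Omeg\tr\T_s\A\pinv\A)$ is then the orthogonal polar factor of that non-symmetric product rather than $\I_N$, and a $2\times 2$ computation shows that a feasible $\T$ with $\A\pinv\A\,\T=(\A\pinv\A\,\G\,\A\pinv\A)^{1/2}_s$ strictly beats $\G^{1/2}_s$ -- so the theorem's ``for some SPD $\G$'' must be read with this structure in force, and your proof has to state and use it explicitly. A smaller omission: the claim that \emph{every} feasible $\T$ satisfies $\A\pinv\A\,\T=\A\pinv\A\,\T_s\Omeg$ for some orthogonal $\Omeg$ does not follow from the remark around \cref{eqn:T_a_0}, which gives only the converse inclusion; you need the standard fact (polar decomposition) that two matrices with equal Gram matrices, here $\A\pinv\A\,\T$ and $\A\pinv\A\,\T_s$, differ by a right orthogonal factor. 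With these two points supplied, your proof is complete and recovers the Frobenius-norm formulation of \citet{hunt2007efficient}.
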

\cref{theo:T_sym_opt} was proven by \citet{ott2004local},
and later restated by \citet{hunt2007efficient} as the constrained optimum
of the Frobenius norm of $\left[\T - \I_N \right]$.
Another interesting and desirable property of the symmetric square root is the fact that the updated ensemble
members are all equally likely realisations of the estimated posterior \citep{wang2004better,mclay2008evaluation}.
More recently, the choice of mapping between the original and the updated ensembles
has been formulated through optimal transport theory
\citep{cotter2012ensemble,oliver2014minimization}.
However, the cost functions therein typically use a different weighting on the norm than $J(\T)$,
in one case yielding an optimum that is the symmetric \emph{left}-multiplying transform matrix
-- not to be confused with the right-multiplying one of \cref{theo:T_sym_opt}.

\Cref{theo:T_sym_opt} and the related properties should benefit
the performance of filters employing the square root update,
whether for the analysis step, the model noise incorporation, or both.
In part, this is conjectured because minimising the displacement of an update
means that the ensemble cloud should retain some of its shape,
and with it higher-order, non-Gaussian information, as illustrated in \cref{fig:snapshot}.

A different set of reasons to expect strong performance from the symmetric square root choice
is that it should promote dynamical consistency,
particularly regarding inequality constraints, such as the inherent positivity of concentration variables,
as well as non-linear equality constraints, initially discussed in \cref{sec:Constraints}.
In either case it stands to reason that smaller displacements are less likely to break the constraints,
and therefore that their minimisation should inhibit it.
Additionally, it is important when using ``local analysis'' localisation
that the ensemble is updated similarly at nearby grid points.
Statistically, this is ensured by employing smoothly decaying localisation functions,
so that $\G$ does not jump too much from one grid point to the next.
But, as pointed out by \citet{hunt2007efficient},
in order to translate this smoothness to dynamical consistency,
it is also crucial that the square root is continuous in $\G$.
Furthermore, even if $\G$ does jump from one grid point to the next,
it still seems plausible that the minimisation of displacement might
restrain the creation of dynamical inconsistencies.

\section{Alternative approaches}
\label{sec:Alternative_approaches}

\begin{table*}[h]
	\caption[Comparison of some model noise incorporation methods.]
	{Comparison of some model noise incorporation methods.
		\vspace{-3ex}}\label{t4}
\begin{center}
\begin{tabular}{l l l l l l}
\toprule
\textbf{Description} & \textbf{Label} & $\A^f = \qquad$ & \textbf{where} & & \textbf{thus satisfying}\\
\midrule
Additive, simulated noise & \ADD{} & $\A + \D$ & $\D$ is a centred sample from $\NormDist(0,\Q)$ 
& & $\Expect_{\D}(\text{\cref{eqn:AfAf_PQ}})$ \\
Scalar inflation & \MULS{} & $\lambda \A$ & $\lambda^2 = \trace( \barP )^{-1} \trace( \barP + \Q)$ 
& & $\trace(\text{\cref{eqn:AfAf_PQ}})$ \\
Multivariate inflation & \MULT{} & $\Lambd \A$ & $\Lambd^2 = \diag( \barP )^{-1} \diag( \barP + \Q)$
& & $\diag(\text{\cref{eqn:AfAf_PQ}})$ \\
Core square root method & \SRN{} & $\A \T$ & $\T = \big(\I_N + (\compactN)\A\pinv \Q \A\pinvtr\big)^{1/2}_s$
& & $\PiA(\text{\cref{eqn:AfAf_PQ}})\PiA$ \\
\bottomrule
\end{tabular}
\end{center}
\end{table*}
This section describes the model noise incorporation methods most relevant methods to this study.
\cref{t4} summarises the methods that will be used in numerical comparison experiments.
\ADD{} is the classic method detailed in \cref{sec:Forecast_sampling_errors}.
\MULS{} and \MULT{} are multiplicative inflation methods.
The rightmost column relates the different methods to each other by
succinctly expressing the degree to which they satisfy \cref{eqn:AfAf_PQ};
it can also be used as a starting point for their derivation.
Note that \MULS{} only satisfies one degree of freedom of \cref{eqn:AfAf_PQ},
while \MULT{} satisfies $m$ degrees, and would therefore be expected to perform better in general.
It is clear that \MULS{} and \MULT{} will generally not provide an exact statistical update,
no matter how big $N$ is,
while \ADD{} reproduces \emph{all} of the moments almost surely as $N \rightarrow \infty$.
By comparison, \SRN{} guarantees obtaining the correct first two moments for any $N>m$,
but does not guarantee higher order moments.

Using a large ensemble size, \cref{fig:snapshot} illustrates the different techniques.
Notably, the cloud of \ADD{} is clearly more dispersed than any of the other methods.
Furthermore, in comparison to \MULT{} and \MULS{}, \SRN{} significantly skewers
the distribution in order to satisfy the off-diagonal conditions.
\begin{figure}[ht]
	\centerline{\includegraphics[width=19pc]{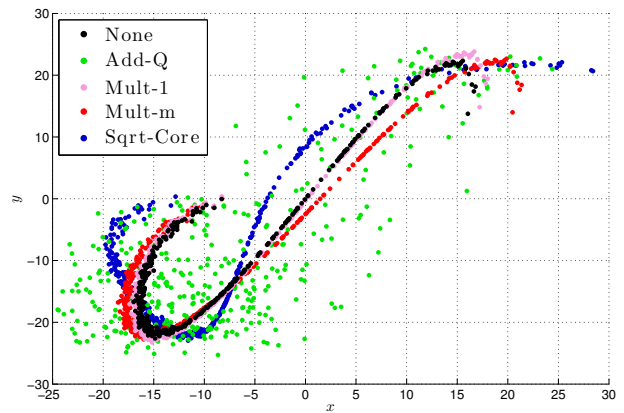}}
	\caption[Scatter plot of ensemble forecasts with the three-dimensional
		Lorenz-63 system \citep{lorenz1963deterministic}
		using different schemes to account for the model noise.]
	  {Scatter plot of ensemble forecasts with the three-dimensional
		Lorenz-63 system \citep{lorenz1963deterministic}
		using different schemes to account for the model noise, 
		which is specified by $\dt Q = \diag([36.00, 3.60, 1.08])$
		and makes up approximately $30\%$ of the total spread of the updated ensembles.
		Each dot corresponds to the ``$(x,y)$'' coordinate of one realisation among $N=400$.
	}
	\label{fig:snapshot}
\end{figure}

Continuing from \cref{sec:Survey},
the following details other pertinent alternatives,
some of them sharing some similarity with the square root methods proposed here.

One alternative is to resample the ensemble fully
from $\NormDist(0,\A\A\tr/(\compactN) + \Q)$.
However, this incurs larger sampling errors than \ADD{},
and is more likely to cause dynamical inconsistencies.

Second-order exact sampling \citep{pham2001stochastic} attempts to sample noise
under the restriction that all of the terms on the second line of \cref{eqn:Pf1} be zero.
It requires a very large ensemble size ($N > 2m$), and is therefore typically not applicable,
though recent work indicate that this might be circumvented
\citep{hoteit2015mitigating}.

The singular evolutive interpolated Kalman (SEIK) filter \citep{hoteit2002simplified}
has a slightly less primitive and intuitive formalism than the EnKF,
typically working with matrices of size $m\times (N-1)$.
Moreover, it does not have a separate step to deal with model noise,
treating it instead implicitly, as part of the analysis step.
This lack of modularity has the drawback that the frequency
of model noise incorporation is not controllable:
in case of multiple model integration steps between observations,
the noise should be incorporated at each step in order to evolve with the dynamics;
under different circumstances, skipping the treatment of noise for a few steps
can be cost efficient \citep{evensen1996assimilation}.
Nevertheless, a stand-alone model noise step can be distilled from the SEIK algorithm as a whole.
Its forecast covariance matrix, $\barP^f$, would equal to that of \SRN{}: $\PiA (\barP + \Q) \PiA$.
However, unlike \SRN{}, which uses the symmetric square root,
the SEIK uses random rotation matrices to update the ensemble.
Also, the SEIK filter uses a ``forgetting factor''.
Among other system errors, this is intended to account for the residual noise covariance, $\QQh$.
As outlined in \cref{sec:Survey},
however, this factor is not explicitly a function of $\QQh$;
it is instead obtained from manual tuning.
Moreover, it is only applied in the update of the ensemble mean.

Another method is to include only the $N-1$ largest eigenvalue components of $\barP + \Q$,
as in reduced-rank square root filters \citep{verlaan1997tidal},
and some versions of the unscented Kalman filter \citep{chandrasekar2008reduced}.
This method can be referred to as \TSVD{} because the update can be effectuated
through a truncated SVD of $[\barP^{1/2}, \Q^{1/2}]$, where the choices of square roots do not matter.
It captures more of the total variance than \SRN{}, but also changes the ensemble subspace.
Moreover, it is not clear how to choose the updated ensemble.
For example, one would suspect dynamical inconsistencies to arise from using the ordered sequence of the truncated SVD.
Right-multiplying by random rotation matrices, as in the SEIK, might be a good solution.
Or, if computed in terms of a \emph{left}-multiplying transform matrix,
the symmetric choice is likely a good one.
Building on \TSVD{}, the ``partially orthogonal'' EnKF and the COFFEE algorithm
of \citep{heemink2001variance,hanea2007hybrid} also recognise the issue of the residual noise.
In contrasts with the treatments proposed in this study,
these methods introduce a complementary ensemble to account for it.

\section{Improving \SRN{}: Accounting for the residual noise}
\label{sec:res_noise}
As explained in \cref{sec:Linearity},
\SRN{} can only incorporate noise components that are in the span (range) of $\A$.
This leaves a residual noise component unaccounted for,
orthogonal to the span of $\A$,
with $\QQh$ posing as its covariance matrix.

First consider \emph{why} there is no such residual of $\R$ for the square root methods in the analysis step:
because the analysis step subtracts uncertainty, unlike the forecast step which adds it.
Therefore the presence or absence of components of $\R$ outside of the span of the observation ensemble
makes no difference to the analysis covariance update because
the ensemble effectively already assumes zero uncertainty in these directions.

In the rest of this section the question addressed is how to deal with the residual noise.
It is assumed that \SRN{}, \cref{eqn:f_sqrt_step}, has already been performed.
The techniques proposed thus \emph{complement} \SRN{},
but do not themselves possess the beneficial properties of \SRN{} discussed in \cref{sec:Properties}.
Also, the notation of the previous section is reused.
Thus, the aim of this section is to find an $\A^f \in \Reals^{m \times N}$
that satisfies, in some limited sense
\begin{align}
	\A^f {\A^f}\tr &= \A {\A}\tr + (N-1) \QQh \, .
	\label{eqn:AfAf_PQhat}
\end{align}

\subsection{Complementary, additive sampling -- \SRAb{}}
\label{sec:add_res_Z}
Let $\Q^{1/2}$ be a \emph{any} square root of $\Q$, and define
\begin{align}
	\label{eqn:Qhh_def}
	\Qhh
	&= \PiA \Q^{1/2} \, , \\
	\label{eqn:Z_def}
	\Z &=
	(\I_m - \PiA) \Q^{1/2} \, ,
\end{align}
the orthogonal projection of $\Q^{1/2}$ onto the column space of $\A$,
and the complement, respectively.

A first suggestion to account for the residual noise is
to use one of the techniques of \cref{sec:Alternative_approaches},
with $\QQh$ taking the place of the full $\Q$ in their formulae.
In particular, with \ADD{} in mind, the fact that
\begin{align}
	\Q^{1/2} = \Qhh + \Z
	\label{eqn:Q_is_Q_p_Z}
\end{align}
motivates sampling the residual noise using $\Z$.
That is, in addition to $\hD$ of \SRN{}, which accounts for $\hQ$,
one also adds $\tD = \Z \tXi$ to the ensemble, where the columns of $\tXi$
are drawn independently from $\NormDist(0,\I_m)$.
We call this technique \SRA{}.

Note that $\Qhh$, defined by \cref{eqn:Qhh_def}, is a square root of $\hQ$.
By contrast, multiplying \cref{eqn:Q_is_Q_p_Z} with its own transpose yields
\begin{align}
	\Z \Z\tr
	&= \QQh - \Qhh \Z\tr - \Z \Qhtr \, ,
	\label{eqn:zz1}
\end{align}
and reveals that $\Z$ is not a square root of $\QQh$.
Therefore, with expectation over $\tXi$,
\SRA{} does not respect $\Expect(\text{\cref{eqn:AfAf_PQhat}})$,
as one would hope.

Thus, \SRA{} has a bias equal to the cross term sum, $\Qhh \Z\tr + \Z \Qhtr = \QQh - \Z \Z\tr$.
Notwithstanding this problem, \cref{coro:trace_zero} of appendix A
shows that the cross terms sum,
has a spectrum symmetric around 0, and thus zero trace.
To some extent, this exonerates \SRA{}, since it means that the expected total variance is unbiased.

\subsection{The underlying problem: replacing a single draw with two independent draws}
\label{sec:repl_one_with_two}

Since any element of $\hQ$ is smaller than the corresponding element in $\Q$,
either one of the multiplicative inflation techniques
can be applied to account for $\QQh$ without second thoughts.
Using \MULS{} would satisfy $\trace (\text{\cref{eqn:AfAf_PQhat}})$,
while \MULT{} would satisfy $\diag(\text{\cref{eqn:AfAf_PQhat}})$.
However, the problem highlighted for \SRA{} is not just a technicality.
In fact, as shown in appendix A section \ref{sec:the_res},
$\QQh$ has negative eigenvalues because of the cross terms.
It is therefore not a valid covariance matrix in the sense that it has no real square root:
samples with covariance $\QQh$ will necessarily be complex numbers;
this would generally be physically unrealisable and therefore inadmissible.
This underlying problem seems to question the validity of the whole approach
of splitting up $\Q$ and dealing with the parts $\hQ$ and $\QQh$ separately.

Let use emphasise the word independently, because that is, to a first approximation,
what we are attempting to do:
replacing a single draw from $\Q$ by one from $\hQ$ plus another, independent draw from $\QQh$.
Rather than considering $N$ anomalies, let us now focus on a single one, and drop the $n$ index.
Define the two random variables,
\begin{align}
	\label{eqn:qpar}
	\qpar &= \Qhh \xipar + \Z \xipar \, , \\
	\label{eqn:qperp}
	\qperp &= \Qhh \hxi + \Z \txi \, ,
\end{align}
where $\xipar, \hxi, \txi$ are random variables independently drawn from
$\NormDist(0,\I_m)$.
By \cref{eqn:Q_is_Q_p_Z}, and design,
$\q$ can be identified with any of the columns of $\D$ of \cref{eqn:add_infl_D}
and, furthermore, $\Var(\q) = \Q$.
On the other hand, while $\qpar$ originates in a single random draw,
$\qperp$ is the sum of two independent draws.

The dependence between the terms of $\qpar$, and the lack thereof for $\qperp$, yields
the following discrepancy between the variances:
\begin{align}
	\label{eqn:var_qpar}
	\Var(\qpar) &= \hQ + \Z \Z\tr + \Qhh \Z\tr + \Z \Qhtr \, ,
	\\
	\label{eqn:var_qperp}
	\Var(\qperp) &= \hQ + \Z \Z\tr \, .
\end{align}
Formally, this is the same problem that was identified with \cref{eqn:zz1},
namely that of finding a real square root of $\QQh$, or eliminating the cross terms.
But \cref{eqn:var_qpar,eqn:var_qperp} show that the problem arises
from the more primal problem of trying to emulate $\qpar$ by $\qperp$.
Vice versa, $\Qhh \Z\tr = 0$ would imply that the ostentatiously dependent terms, $\Qhh \bxi$ and $\Z \bxi$,
are independent, and thus $\qperp$ is emulated by $\qpar$.

\subsection{Reintroducing dependence -- \SRDb{}}
\label{sec:reintro_dep}
As already noted, though, making the cross terms zero is not possible for general $\A$ and $\Q$.
However, the perspective of $\qpar$ and $\qperp$ hints at another approach:
reintroducing dependence between the draws.
In this section we will reintroduce dependence by making the residual sampling
depend on the square root equivalent, $\hD$ of \cref{eqn:Dh_def}.

The trouble with the cross terms
is that $\Q$ ``gets in the way'' between $\PiA$ and $(\I_m - \PiA)$,
whose product would otherwise be zero.
Although less ambitious than emulating $\qpar$ with $\qperp$,
it is possible to emulate a single draw from $\NormDist[0,\I_m]$, e.g. $\xipar$,
with two independent draws:
\begin{align}
	\xiperp &= \bPi \hxi + (\I_m-\bPi) \txi \, ,
	\label{eqn:XiPi}
\end{align}
where, as before, $\hxi$ and $\txi$ are independent random variables with law $\NormDist(0,\I_m)$,
and $\bPi$ is some orthogonal projection matrix.
Then, as the cross terms cancel,
\begin{align}
	\label{eqn:Var_xiperp_1}
	\bPi \bPi\tr + (\I_m-\bPi)(\I_m-\bPi)\tr = \I_m \, ,
\end{align}
and thus $\Var(\xiperp) = \Var(\xipar)$.

We can take advantage of this emulation possibility by choosing $\bPi$
as the orthogonal projector onto the \emph{rows} of $\Qhh$.
Instead of \cref{eqn:qpar}, redefine $\q$ as
\begin{align}
	\label{eqn:q_fin_def}
	\q
	&=
	\Q^{1/2} \xiperp \, .
\end{align}
Then, since $\Var(\xiperp) = \I_m$,
\begin{align}
	\Var(\q) &= \Q^{1/2} \I_m \Q\trsqrt = \Q \, ,
	\label{eqn:Var_q_fin}
\end{align}
as desired. But also
\begin{align}
	\label{eqn:Var_q_fin_3}
	\q
	&= (\Qhh + \Z) \left( \PiQ \hxi  + (\I_m-\PiQ) \txi \right)
	\\
	\label{eqn:Var_q_fin_7}
	&= \Qhh \hxi + \Z \left( \PiQ \hxi  + (\I_m-\PiQ) \txi \right) \, .
\end{align}
The point is that, while maintaining $\Var(\q) = \Q$,
and despite the reintroduction of dependence between the two terms in \cref{eqn:Var_q_fin_7},
the influence of $\txi$ has been confined to $\vspan(\Z) = \vspan(\A)^\perp$.
The above reflections yield the following algorithm, labelled \SRD{}:
\begin{compactenum}
	\item Perform the core square root update for $\hQ$, \cref{eqn:f_sqrt_step};
	\item Find $\hXi$ such that $ \Qhh_s \hXi = \hD$ of \cref{eqn:Dh_def}.
		Components in the kernel of $\Qhh_s$ are inconsequential;
	\item Sample $\tXi$ by drawing each column independently from $\NormDist(0,\I_m)$;
	\item Compute the residual noise, $\tD$, and add it to the ensemble anomalies;
		\begin{align}
			\tD = \Z \left( \PiQ \hXi  + (\I_m-\PiQ) \tXi \right) \, .
			\label{eqn:D_tilde_7}
		\end{align}
\end{compactenum}
Unfortunately, this algorithm requires the additional SVD of $\Qhh$ in order to compute 
$\PiQ$ and $\hXi$. Also, despite the reintroduction of dependence,
\SRD{} is not fully consistent, as discussed in appendix B.

\section{Experimental set-up}
\label{sec:Setup}
The model noise incorporation methods detailed in \cref{sec:forecast_sqrt,sec:res_noise}
are benchmarked using ``twin experiments'',
where a ``truth'' trajectory is generated and subsequently estimated by the ensemble DA systems.
As indicated by \cref{eqn:state_sde_discrete,eqn:obs_eqn},
stochastic noise is added to the truth trajectory and observations, respectively.
As defined in \cref{eqn:state_sde_discrete},
$\Q$ implicitly includes a scaling by the model time step,
$\dt$, which is the duration between successive time indices.
Observations are not taken at every time index,
but after a duration, $\dtObs$, called the DA window, which is a multiple of $\dt$.

The noise realisations excepted,
the observation process, \cref{eqn:obs_eqn}, given by $\bH$, $\R$, and $\dtObs$,
and the forecast process, \cref{eqn:state_sde_discrete}, given by $f$, $\bmu^0$, $\bP^0$ and $\Q$,
are both perfectly known to the DA system.
The analysis update is performed using the
symmetric square root update of \cref{sec:analysis_sqrt}
for all of the methods under comparison.
Thus, the only difference between the ensemble DA systems is their model noise incorporation method.

Performance is measured by the root-mean-square error of the ensemble mean,
given by:
\begin{align}
	\RMSE = \sqrt{\frac{1}{m} \norm{\bx^t - \x^t}_2^2} \, ,
	\label{eqn:RMSE_3}
\end{align}
for a particular time index $t$.
By convention, the RMSE is measured only immediately following each analysis update.
In any case, there was little qualitative difference to ``forecast'' RMSE averages,
which are measured right \emph{before} the analysis update.
The score is averaged for all analysis times after an
initial transitory period whose duration is estimated beforehand by studying the RMSE time series.
Each experiment is repeated 16 times with different initial random seeds.
The empirical variances of the RMSEs are checked to ensure satisfying convergence.

Covariance localisation is not used.
Following each analysis update, the ensemble anomalies are rescaled by a
scalar inflation factor intended to compensate for the consequences of sampling error in the analysis
\citep[e.g.][]{anderson1999monte,bocquet2011ensemble}.
This factor, listed in \cref{t1}, was approximately optimally tuned
prior to each experiment.
In this tuning process the \ADD{} method was used for the forecast noise incorporation,
putting it at a slight advantage relative to the other methods.

In addition to the EnKF with different model incorporation methods,
the twin experiments are also run with the standard methods of \cref{t4}
for comparison, as well as three further baselines:
(a) the climatology, estimated from several long, free runs of the system,
(b) 3D-Var (optimal interpolation) with the background from the climatology, and
(c) the extended Kalman filter \citep{rodgers2000inverse}.

\subsection{The linear advection model}
\label{sec:LA}

The linear advection model evolves according to
\begin{align}
	x^{t+1}_i = 0.98 x^{t}_{i-1}
	\label{eqn:LA_1}
\end{align}
for $t = 0,\ldots$, $\iinds$, with $m=1000$, and periodic boundary conditions.
The dissipative factor is there to counteract amplitude growth due to model noise.
Direct observations of the truth are taken at $p=40$ equidistant locations,
with $\R = 0.01 \I_p$, every fifth time step.

The initial ensemble members, $\{\x_n^0 \mid \ninds\}$, as well as the truth, $\x^0$,
are generated as a sum of 25 sinusoids of random amplitude and phase,
\begin{align}
	x_{i,n}^0 = \frac{1}{c_n} \sum_{k=1}^{\kMax} a_n^k \sin \left(2\pi k \left[ i/m + \varphi_n^k \right]\right) \, ,
	\label{eqn:LA_x0}
\end{align}
where $a_n^k$ and $\varphi_n^k$
is drawn independently and uniformly from the interval $(0,1)$
for each $n$ and $k$,
and the normalisation constant, $c_n$, is such that the standard deviation of each $\x_n^0$ is 1.
Note that the spatial mean of each realisation of \cref{eqn:LA_x0} is zero.
The model noise is given by
\begin{align}
	\Q &= 0.01\Var(\x^0) \, .
	\label{eqn:Q_LA}
\end{align}

\subsection{The Lorenz-96 model}
\label{sec:L40}
The Lorenz-96 model evolves according to
\begin{align}
	\dd{\x_i}{t}
	=
	\left( x_{i+1} - x_{i-2} \right) x_{i-1} - x_{i} + F \, ,
	\label{eqn:L96_1}
\end{align}
for $t>0$, and $\iinds$, with periodic boundary conditions.
It is a nonlinear, chaotic model that mimics the atmosphere at a certain latitude circle.
We use the parameter settings of \citet{lorenz1998optimal},
with a system size of $m = 40$, a forcing of $F = 8$, and the fourth-order Runge-Kutta numerical time stepping scheme
with a time step of $\dt = 0.05$.
Unless otherwise stated, direct observations of the entire state vector are taken
a duration of $\dtObs = 0.05$ apart, with $\R = \I_m$.

The model noise is spatially homogeneous, generated using a Gaussian autocovariance function,
\begin{align}
	\Q_{i,j} &= \exp\left(-1/30 \norm{i-j}_2^2\right) + 0.1\delta_{i,j} \, ,
	\label{eqn:L40_Gauss_Q}
\end{align}
where the Kronecker delta, $\delta_{i,j}$, has been added for numerical stability issues.

\section{Experimental results}
\label{sec:Results}

Each figure contains the results from a set of experiments run for a range of some control variable.

\subsection{Linear advection}
\label{sec:Linear_advection}

\Cref{fig:LA_m6} shows the RMSE versus the ensemble size for different
model noise incorporation schemes.
The maximum wavenumber of \cref{eqn:LA_x0} is $k=\kMax$.
Thus, by the design of $\bP^0$ and $\Q$, the dynamics will
take place in a subspace of rank 50, even though $m=1000$.
This is clearly reflected in the curves of the square root methods,
which all converge to the optimal performance of the Kalman filter (0.15)
as $N$ approaches $51$, and $\Z$ goes to zero.
\SRA{} takes a little longer to converge because of numerical error.
The multiplicative inflation curves are also constant for $N \geq 51$,
but they do not achieve the same level of performance.
As one would expect, \ADD{} also attains the performance of the Kalman filter as $N \rightarrow \infty$.
\begin{figure}
	\centerline{\includegraphics[width=19pc]{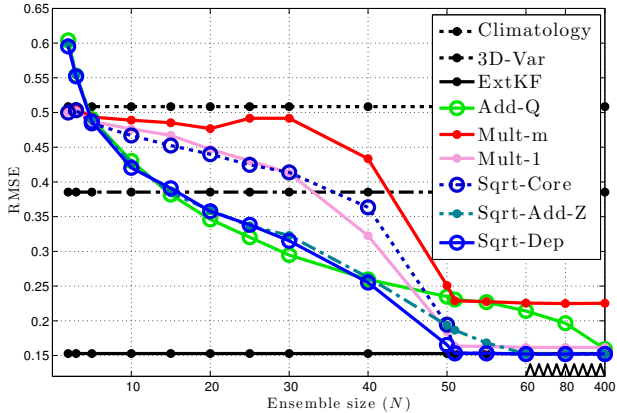}}
	\caption[
		Performance benchmarks as a function of the ensemble size, $N$,
		obtained with the linear advection system.
	]
	{
		Performance benchmarks as a function of the ensemble size, $N$,
		obtained with the linear advection system.
		The scale has been irregularly compressed for $N > 60$.
	}
	\label{fig:LA_m6}
\end{figure}

Interestingly, despite \MULT{} satisfying \cref{eqn:AfAf_PQ} to a higher degree
than \MULS{}, the latter performs distinctly better across the whole range of $N$.
This can likely be blamed on the fact that \MULT{} has the adverse effect of changing the subspace of the ensemble,
though it is unclear why its worst performance occurs near $N = 25$.

\ADD{} clearly outperforms \MULS{} in the intermediate range of $N$,
indicating that the loss of nuance in the covariance matrices of \MULS{}
is more harmful than the sampling error incurred by \ADD{}.
But, for $45 < N < 400$, \MULS{} beats \ADD{}. It is not clear why this reversal happens.

\SRN{} performs quite similar to \MULS{}.
In the intermediate range, it is clearly deficient compared to the square root methods that account for residual noise,
illustrating the importance of doing so.
The performance of \SRD{} is almost uniformly superior to all of the other methods.
The only exception is around $N=25$, where \ADD{} slightly outperforms it.
The computationally cheaper \SRA{} is beaten by \ADD{} for $N<40$,
but has a surprisingly robust performance nevertheless.

\subsection{Lorenz-96}
\label{sec:Lorenz96}

\begin{figure}
	\centerline{\includegraphics[width=19pc]{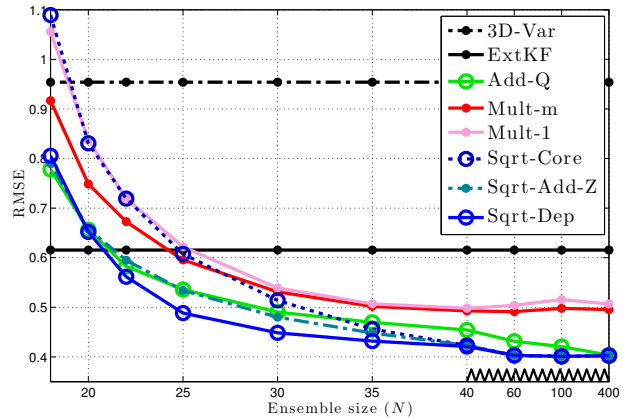}}
	\caption[
		Performance benchmarks as a function of the ensemble size, $N$,
		obtained with the Lorenz-96 system.
	]
	{Performance benchmarks as a function of the ensemble size, $N$,
		obtained with the Lorenz-96 system.
		The climatology averages an RMSE of 3.69 for both figures.
		The scale has been irregularly compressed for $N > 40$.
	}
	\label{fig:L40_N_m70}
\end{figure}

\Cref{fig:L40_N_m70} shows the RMSE versus ensemble size.
As with the linear advection model,
the curves of the square root schemes are coincident when $\Z = 0$,
which here happens for $N > m = 40$.
In contrast to the linear advection system, however,
the square root methods still improve as $N$ increases beyond $m$,
and noticeably so until $N = 60$.
This is because a larger enable is better able to characterise
the non-Gaussianity of the distributions and the non-linearity of the models.
On the other hand, the performance of the multiplicative inflation methods stagnates
around $N=m$, and even slightly deteriorates for larger $N$.
This can probably be attributed to the effects observed by \citet{sakov2008implications}.

Unlike the more ambiguous results of the linear advection model,
here \ADD{} uniformly beats the multiplicative inflation methods.
Again, the importance of accounting for the residual noise is highlighted
by the poor performance of \SRN{} for $N < 40$.
However, even though \SRA{} is biased, it outperforms \ADD{} for $N>25$,
and approximately equals it for smaller $N$.

The performance of \SRD{} is nearly uniformly the best,
the exception being at $N=18$, where it is marginally beaten by \ADD{} and \SRA{}.
The existence of this occurrence can probably be attributed to the slight suboptimality discussed in Appendix B,
as well as the advantage gained by \ADD{} from using it to tune the analysis inflation.
Note, though, that this region is hardly interesting, since results lie above the
baseline of the extended KF.

\ADD{} asymptotically attains the performance of the square root methods.
In fact, though it would have been imperceptible if added to \cref{fig:L40_N_m70},
experiments show that \ADD{} beats \SRD{} by an average RMSE difference of 0.005
at $N=800$, as predicted in \cref{sec:Alternative_approaches}.

\begin{figure}
	\centerline{\includegraphics[width=19pc]{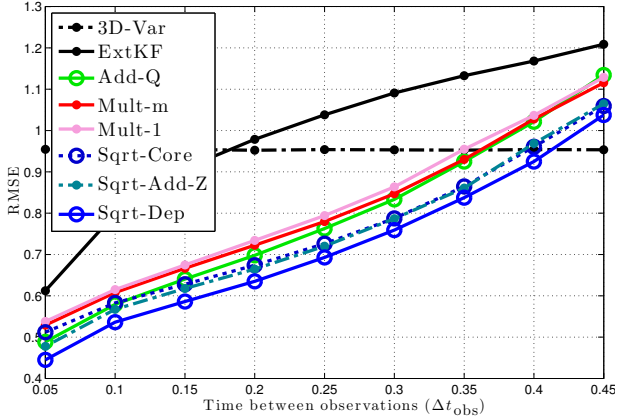}}
	\caption[
		Performance benchmarks as a function of the data assimilation window, $\dtObs$,
		obtained with the Lorenz-96 model.
	]
	{
		Performance benchmarks as a function of the data assimilation window, $\dtObs$,
		obtained with the Lorenz-96 model and $N=30$.
		The climatology averages an RMSE of 3.7.
	}
	\label{fig:L40_dtObs_m70}
\end{figure}
\Cref{fig:L40_dtObs_m70} shows the RMSE versus the DA window.
The performance of \ADD{} clearly deteriorates more than that of all
of the deterministic methods as $\dtObs$ increases.
Indeed, the curves of \SRN{} and \ADD{} cross at $\dtObs \approx 0.1$,
beyond which \SRN{} outperforms \ADD{}.
\SRN{} even gradually attains the performance of \SRA{},
though this happens in a regime where all of the EnKF methods are beaten by 3D-Var.
Again, however, \SRD{} is uniformly superior,
while \SRA{} is uniformly the second best.
Similar tendencies were observed in experiments (not shown) with $N=25$.

\begin{figure}
	\centerline{\includegraphics[width=19pc]{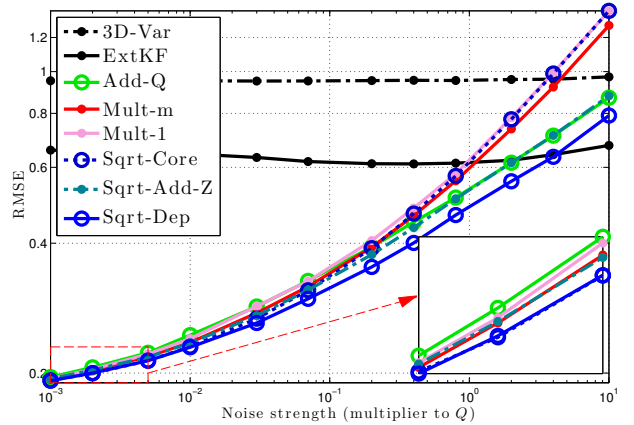}}
	\caption[
		Performance benchmarks as a function of the noise strength,
		obtained with the Lorenz-96 model.
	]
	{
		Performance benchmarks as a function of the noise strength,
		obtained with the Lorenz-96 model and $N=25$.
		Both axes are logarithmic.
		On average, when $\Q$ is multiplied by $10^{-3}$ (resp. $10^{-2},10^{-1},10^{0},10^{1}$),
		the model noise makes up approximately $0.5$ (resp. $4,20,70,90$) percent
		of the growth in the spread of the ensemble. 
		The climatology averages an RMSE score of approximately 4.
	}
	\label{fig:L40_Q_m70}
\end{figure}
\Cref{fig:L40_Q_m70} shows the RMSE versus the amplitude of the noise.
Towards the left, the curves converge to the same value as the noise approaches zero.
At the higher end of the range, the curves of \MULT{} and \SRN{}
are approximately twice as steep as that of \SRD{}.
Again, \SRD{} performs uniformly superior to the rest, with \SRA{} performing second best.
In contrasts, \ADD{} performs worse than \MULT{} for a noise strength multiplier smaller than 0.2,
but better as the noise gets stronger.

\section{Summary and discussion}
\label{sec:Discussion}

The main effort of this study has been to extend the square root approach of the EnKF analysis step
to the forecast step in order to account for model noise.
Although the primary motivation is to eliminate the need for simulated, stochastic perturbations,
the core method, \SRN{}, was also found to possess several other desirable properties,
which it shares with the analysis square root update.
In particular, a formal survey on these features revealed that the symmetric square root choice for the
transform matrix can be beneficial in regards to dynamical consistency.

Yet, since it does not account for the residual noise,
\SRN{} was found to be deficient in case the noise is strong and the dynamics relatively linear.
In dealing with the residual noise, cursory experiments (not shown) suggested that an additive approach works better than
a multiplicative approach, similar to the forgetting factor of the SEIK.
This is likely a reflection of the relative performances of \ADD{} and \MULT{},
as well as the findings of \citet{whitaker2012evaluating},
which indicate that the additive approach is better suited to account for model error.
Therefore, two additive techniques were proposed to complement \SRN{}, namely \SRA{} and \SRD{}.
Adding simulated noise with no components in the ensemble subspace, \SRA{} is computationally relatively cheap
as well as intuitive.
However, it was shown to yield biased covariance updates due to the presence of cross terms.
By reintroducing dependence between the \SRN{} update and the sampled, residual noise,
\SRD{} remedies this deficiency at the cost of an additional SVD.

The utility of the noise integration methods proposed will depend on the properties
of the system under consideration.
However, \SRD{} was found to perform robustly (nearly uniformly) better than all of the other methods.
Moreover, the computationally less expensive method \SRA{} was also found to have robust performance.
These findings are further supported by omitted experiments using
fewer observations, larger observation error, and different models.

\subsection*{Future directions}
The model noise square root approach has shown significant promise on low-order models,
but has not yet been tested on realistic systems.
It is also not clear how this approach performs with more realistic forms of model error.

As discussed in Appendix B, a more shrewd choice of $\Q^{1/2}$ might improve \SRD{}.
This choice impacts $\hXi$, but not the core method, 
as shown in Appendix A section \ref{sec:Eliminating_C},
and should not be confused with the choice of $\T^f$.
While the Cholesky factor yielded worse performance than the symmetric choice,
other options should be contemplated.

\citet{nakano2013prediction} proposed a method that is distinct, yet quite similar to \SRN{},
this should be explored further, in particular with regards to the residual noise.

\acknowledgments
The authors thank Marc Bocquet for many perspectives and ideas,
some of which are considered for ongoing and future research.
Additionally, Chris Farmer and Irene Moroz have been very helpful in improving the numerics.
The work has been funded by Statoil Petroleum AS, with co-funding by the European FP7 project SANGOMA (grant no. 283580).

\appendix
\appendixtitle{The residual noise}
\subsection{The cross terms}
\label{sec:cross}
Let $\C$ be the sum of the two cross terms:
\begin{align}
	\label{eqn:Cross_1}
	\C &= \Qhh \Z\tr + \Z \Qhtr \\
	&= \PiA \Q (\I_m - \PiA) + (\I_m - \PiA) \Q \PiA \, .
\end{align}
Note that $\vspan(\Qhh \Z\tr) \subseteq \vspan(\A) \subseteq \ker(\Qhh \Z\tr)$, and therefore
$\Qhh \Z\tr$ (and its transpose) only has the eigenvalue 0.
Alternatively one can show that it is nilpotent of degree 2.
By contrast, the nature of the eigenvalues of $\C$ is quite different.
\begin{theo}[Properties of $\C$]
	\label{theo:cross_ews}
The symmetry of $\C \in \Reals^{m^2}$ implies, by the spectral theorem,
that its spectrum is real.
Suppose that $\lambda$ is a non-zero eigenvalue of $\C$,
with eigenvector $\bv = \vA + \vB$,
where	$\vA = \PiA \bv$ and $\vB = (\I_m - \PiA) \bv$.
Then (a) $\bu = \vA - \vB$ is also an eigenvector, (b) its eigenvalue is $-\lambda$,
and (c) neither $\vA$ nor $\vB$ are zero.
\end{theo}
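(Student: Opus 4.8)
The plan is to exploit the fact that $\C$ is block-anti-diagonal with respect to the orthogonal splitting $\Reals^m = \range(\PiA) \oplus \range(\PiA)^\perp$. First I would record that $\C = \Qhh \Z\tr + \Z \Qhtr = \PiA \Q (\I_m - \PiA) + (\I_m - \PiA) \Q \PiA$ is manifestly symmetric, each summand being the transpose of the other, so the spectral theorem immediately yields the real spectrum. The key observation comes next: for $\vA \in \range(\PiA)$ one has $\PiA \vA = \vA$ and $(\I_m-\PiA)\vA = 0$, hence $\C \vA = (\I_m - \PiA)\Q \vA \in \range(\PiA)^\perp$; symmetrically $\C \vB = \PiA \Q \vB \in \range(\PiA)$ for $\vB \in \range(\PiA)^\perp$. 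Thus $\C$ interchanges the two subspaces, and in an orthonormal basis adapted to the splitting it is block-anti-diagonal with the two off-diagonal blocks transposes of one another.

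Writing out $\C \bv = \lambda \bv$ in its two components then gives the coupled relations $(\I_m - \PiA)\Q \vA = \lambda \vB$ and $\PiA \Q \vB = \lambda \vA$. For parts (a) and (b) I would simply apply $\C$ to $\bu = \vA - \vB$: by the same component identities, $\C \bu = \PiA \Q(-\vB) + (\I_m - \PiA)\Q \vA = -\lambda \vA + \lambda \vB = -\lambda \bu$, so $\bu$ is an eigenvector of $\C$ with eigenvalue $-\lambda$, once we know $\bu \neq 0$.

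Part (c), which also secures $\bu \neq 0$, I would prove by contradiction using the hypothesis $\lambda \neq 0$. If $\vB = 0$, the coupled relations force $\lambda \vA = \PiA \Q \vB = 0$, whence $\vA = 0$ and $\bv = 0$, contradicting that $\bv$ is an eigenvector; the case $\vA = 0$ is symmetric. Hence both $\vA$ and $\vB$ are nonzero, and since they are orthogonal, $\norm{\bu}^2 = \norm{\vA}^2 + \norm{\vB}^2 = \norm{\bv}^2 > 0$. I do not expect a genuine obstacle here; the only point requiring care is that the conclusions really do rely on $\lambda \neq 0$ (for $\lambda = 0$ a vector lying entirely in $\range(\PiA)$ can be an eigenvector, and (a)--(c) fail), together with keeping the idempotence and symmetry of $\PiA$ straight when passing between the $\Qhh,\Z$ form and the two-sided-projection form of $\C$.
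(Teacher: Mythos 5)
Your proposal is correct and follows essentially the same route as the paper's proof: the same splitting $\bv = \vA + \vB$, the same observation that $\C$ swaps $\vspan(\A)$ and its orthogonal complement (giving $\C\vA = \lambda\vB$, $\C\vB = \lambda\vA$), the same computation $\C\bu = -\lambda\bu$, and part (c) from the coupled relations. Your explicit contradiction argument for (c), together with the remark that $\bu \neq 0$ and that $\lambda \neq 0$ is genuinely needed, simply spells out what the paper leaves implicit.
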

\begin{proof}
	Note that
	\begin{alignat}{2}
		\label{eqn:CvA}
		\C \vA &= (\I_m-\PiA) \Q \vA & &\in \vspan(\A)^\perp \, , \\
		\label{eqn:CvB}
		\C \vB &=  \PiA \Q \vB    & &\in \vspan(\A) \, .
	\end{alignat}
	As
	$\C \bv = \lambda [\vA + \vB]$, \cref{eqn:CvA,eqn:CvB} imply that
	\begin{align}
		\label{eqn:CVAB2}
		\C \vA &= \lambda \vB \, , \\
		\C \vB &= \lambda \vA \, .
		\label{eqn:CVAB4}
	\end{align}
	Therefore,
	\begin{align}
		\C \bu &= \C [\vA - \vB] = \lambda \vB - \lambda \vA = -\lambda [\vA - \vB] \qedhere
		\label{eqn:Cross_F}
	\end{align}
	\Cref{eqn:CVAB2,eqn:CVAB4} can also be seen to imply (c).
\end{proof}
\begin{coro}{$\trace(\C) = 0$.}
	\label{coro:trace_zero}
	This follows from the fact that the trace of a matrix equals the sum of its eigenvalues.
\end{coro}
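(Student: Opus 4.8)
The statement is essentially immediate once \Cref{theo:cross_ews} is in hand, so the plan is mainly to spell out the bookkeeping. The route I would take: since $\C$ is real and symmetric, the spectral theorem gives $\trace(\C) = \sum_j \lambda_j$, the sum being over the eigenvalues of $\C$ counted with multiplicity. The zero eigenvalue contributes nothing, and by \Cref{theo:cross_ews} every nonzero eigenvalue $\lambda$ is matched by $-\lambda$. The only point that needs a sentence of justification is that $\lambda$ and $-\lambda$ occur with \emph{equal} multiplicity, so that they genuinely cancel in the sum.

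To obtain the equal multiplicities I would package the pairing $\bv = \vA + \vB \mapsto \bu = \vA - \vB$ of \Cref{theo:cross_ews} as the linear reflection $\Phi = 2\PiA - \I_m$, which satisfies $\Phi^2 = \I_m$ and is thus its own inverse. Parts (a)--(b) of \Cref{theo:cross_ews} say precisely that $\Phi$ carries $\ker(\C - \lambda \I_m)$ into $\ker(\C + \lambda \I_m)$ for $\lambda \neq 0$; applying $\Phi$ once more it maps back, so $\Phi$ restricts to a linear isomorphism between these two eigenspaces, whence $\dim \ker(\C - \lambda \I_m) = \dim \ker(\C + \lambda \I_m)$. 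Summing the eigenvalues grouped in $\pm\lambda$ pairs then yields $\trace(\C) = 0$.

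It is worth recording that the same reflection gives a proof that avoids diagonalisation altogether: starting from $\C = \PiA \Q (\I_m - \PiA) + (\I_m - \PiA) \Q \PiA$ one checks the identity $\PiA \C + \C \PiA = \C$ (using $(\I_m-\PiA)\PiA = 0$ and $\PiA\C\PiA = 0$), which rearranges to $\Phi \C \Phi = -\C$; then $\trace(\C) = \trace(\Phi \C \Phi) = \trace(-\C)$ by cyclicity of the trace together with $\Phi^2 = \I_m$, forcing $\trace(\C) = 0$. Even more directly, $\trace(\Qhh \Z\tr) = \trace\bigl(\PiA \Q (\I_m - \PiA)\bigr) = \trace\bigl((\I_m - \PiA)\PiA \Q\bigr) = 0$ since $(\I_m - \PiA)\PiA = 0$, and likewise $\trace(\Z \Qhtr) = 0$. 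There is no real obstacle here: the only thing requiring care is the multiplicity count if one insists on arguing through the spectrum, and verifying the identity $\PiA \C + \C \PiA = \C$ (or $(\I_m-\PiA)\PiA = 0$) if one takes either trace-cyclicity shortcut.
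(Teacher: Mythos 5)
Your argument is correct, and its first half is essentially the paper's own proof: the corollary in the paper is justified exactly by ``trace equals the sum of the eigenvalues'' together with the $\pm\lambda$ pairing of \Cref{theo:cross_ews}. What you add beyond the paper is worthwhile on two counts. First, you close a small gap the paper leaves implicit: for the eigenvalues to cancel one needs $\lambda$ and $-\lambda$ to have equal multiplicity, and your observation that the pairing $\bv \mapsto \vA - \vB$ is the involution $\Phi = 2\PiA - \I_m$, which restricts to an isomorphism between the two eigenspaces, supplies exactly that. Second, your alternative routes are genuinely different and more elementary: the identity $\Phi \C \Phi = -\C$ (equivalent to $\PiA \C + \C \PiA = \C$, which your computation verifies) gives $\trace(\C) = -\trace(\C)$ without ever invoking \Cref{theo:cross_ews}, and the most direct version, $\trace(\Qhh \Z\tr) = \trace\bigl((\I_m-\PiA)\PiA\Q\bigr) = 0$ by cyclicity and $(\I_m-\PiA)\PiA = 0$, needs nothing but the definitions in \cref{eqn:Qhh_def,eqn:Z_def}. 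The paper's spectral route buys a corollary that sits naturally after \Cref{theo:cross_ews} and reinforces the symmetric-spectrum picture used later in \Cref{theo:neg_ews}; your cyclicity argument buys brevity and independence from the spectral machinery. Both are sound.
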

\begin{coro}{$\norm{\vA}_2^2 = \norm{\vB}_2^2$.}
	\label{coro:equally_sized_components}
	This follows from the fact that
	$\bv\tr \bu = (\vA + \vB)\tr (\vA - \vB) = \vA\tr\vA - \vB\tr\vB$
	should be zero by the spectral theorem.
\end{coro}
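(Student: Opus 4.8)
The plan is to exploit the orthogonality of eigenvectors belonging to distinct eigenvalues of the symmetric matrix $\C$. By \cref{theo:cross_ews}, the vectors $\bv = \vA + \vB$ and $\bu = \vA - \vB$ are both eigenvectors of $\C$, with eigenvalues $\lambda$ and $-\lambda$ respectively. Since $\lambda \neq 0$ by hypothesis, these two eigenvalues are distinct. The spectral theorem for real symmetric matrices then guarantees that $\bv$ and $\bu$ are orthogonal, i.e.\ $\bv\tr \bu = 0$.

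The second step is to expand this inner product in terms of its orthogonal parts. I would write
\begin{align*}
	\bv\tr \bu = (\vA + \vB)\tr (\vA - \vB) = \vA\tr\vA - \vB\tr\vB + \vB\tr\vA - \vA\tr\vB \, .
\end{align*}
The cross terms vanish because $\vA = \PiA \bv$ lies in $\vspan(\A)$ while $\vB = (\I_m - \PiA)\bv$ lies in $\vspan(\A)^\perp$; the orthogonality of the projector $\PiA$ forces $\vA\tr\vB = \vB\tr\vA = 0$. Hence $\bv\tr\bu = \norm{\vA}_2^2 - \norm{\vB}_2^2$, and combining this with $\bv\tr\bu = 0$ from the first step yields $\norm{\vA}_2^2 = \norm{\vB}_2^2$, as claimed.

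There is no genuine obstacle here, the result being a short consequence of \cref{theo:cross_ews}. The only point that merits care is verifying that the cross terms $\vA\tr\vB$ truly cancel; this rests on the fact that $\vA$ and $\vB$ are, by their very definition as the images of $\bv$ under the complementary orthogonal projectors $\PiA$ and $\I_m - \PiA$, mutually orthogonal components. This is precisely the decomposition already invoked in the statement of \cref{theo:cross_ews}, so the argument closes with essentially no further computation.
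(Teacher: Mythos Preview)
Your proof is correct and follows exactly the same route as the paper: use \cref{theo:cross_ews} to obtain eigenvectors $\bv$ and $\bu$ with distinct eigenvalues $\pm\lambda$, invoke the spectral theorem to get $\bv\tr\bu = 0$, and expand the inner product using the orthogonality of $\vA$ and $\vB$. The paper's version is simply terser, skipping straight to $\bv\tr\bu = \vA\tr\vA - \vB\tr\vB$ without writing out the vanishing cross terms.
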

Interestingly, imaginary, skew-symmetric matrices also have the property that their eigenvalues,
all of which are real, come in positive/negative pairs.
These matrices can all be written $\M-\M\tr$ for some $\M \in \Imags^{m^2}$,
which is very reminiscent of $\C$.
However, it is not clear if these parallels can be used to prove
\cref{theo:cross_ews} because $\M-\M\tr$ only has zeros on the diagonal,
while $\C$ generally does not (by symmetry, it can be seen that this would imply $\C=0$).
Also, \cref{theo:cross_ews} depends on the fact that the cross terms are ``flanked'' by orthogonal
projection matrices, whereas there are no requirements on $\M$.

\subsection{The residual covariance matrix}
\label{sec:the_res}
The residual, $\QQh$, differs from the symmetric, positive matrix $\Z \Z\tr$
by the cross terms, $\C$. The following theorem establishes a problematic consequence.
\begin{theo}[$\QQh$ is not a covariance matrix]
	\label{theo:neg_ews}
	Provided $\C \neq 0$, the residual ``covariance'' matrix, $\QQh$, has negative eigenvalues.
\end{theo}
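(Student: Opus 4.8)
The plan is to exploit the structure already uncovered in \cref{theo:cross_ews}: the residual splits as $\QQh = \Z\Z\tr - \C$, where $\Z\Z\tr$ is positive semidefinite and $\C$ is symmetric with a spectrum symmetric about $0$. The key observation is that $\C \neq 0$ forces $\C$ to have a strictly positive eigenvalue $\lambda > 0$ (since its nonzero eigenvalues come in $\pm\lambda$ pairs, and it is not identically zero). I would then aim to produce a vector $\bw$ with $\bw\tr \QQh \bw < 0$, which by the variational characterisation of eigenvalues immediately gives a negative eigenvalue of the symmetric matrix $\QQh$.

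First I would take $\lambda > 0$ an eigenvalue of $\C$ with eigenvector $\bv = \vA + \vB$, $\vA = \PiA \bv \neq 0$, $\vB = (\I_m - \PiA)\bv \neq 0$, normalised (using \cref{coro:equally_sized_components}) so that $\norm{\vA}_2^2 = \norm{\vB}_2^2 = 1$; thus $\norm{\bv}_2^2 = 2$ because $\vA \perp \vB$. Next I would evaluate $\bv\tr \C \bv$: from \cref{eqn:CVAB2,eqn:CVAB4}, $\C\bv = \lambda(\vA + \vB)$, so $\bv\tr\C\bv = \lambda \norm{\bv}_2^2 = 2\lambda$. Then I would bound the other term: $\bv\tr \Z\Z\tr \bv = \norm{\Z\tr \bv}_2^2$. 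Here $\Z\tr = \Q\trsqrt(\I_m - \PiA)$, so $\Z\tr \bv = \Z\tr \vB$ (the $\PiA \vA$ part is annihilated), giving $\bv\tr \Z\Z\tr \bv = \norm{\Z\tr \vB}_2^2$. The natural candidate is to use $\bv$ itself, but this only works if $\norm{\Z\tr\vB}_2^2 < 2\lambda$, which need not hold in general.

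To get around that, the cleaner route is to use the $\pm\lambda$ pair directly. Recall $\bu = \vA - \vB$ is an eigenvector with eigenvalue $-\lambda$, and $\norm{\bu}_2^2 = 2$ as well. Consider the two-dimensional subspace $\vspan(\bv,\bu) = \vspan(\vA,\vB)$ and restrict $\QQh$ to it. On this subspace $\Z\Z\tr$ acts only through $\vB$ (since $\Z\tr\vA = 0$), so its restriction is rank $\le 1$ and positive semidefinite; meanwhile the restriction of $-\C$ has the pair of eigenvalues $\{-\lambda, +\lambda\}$ (equivalently, $-\C$ restricted to $\vspan(\vA,\vB)$ has trace $0$ and determinant $-\lambda^2 \norm{\vA}_2^2\norm{\vB}_2^2 < 0$ in a suitable basis, hence one strictly negative eigenvalue). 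Adding a rank-one positive semidefinite matrix to a symmetric $2\times 2$ matrix of determinant $< 0$ can only... I would instead argue via determinant: the restriction of $\QQh$ to $\vspan(\vA,\vB)$ is $P - C_0$ where $P \succeq 0$ has rank $\le 1$ and $C_0$ has $\det C_0 < 0$; then $\det(P - C_0)$ — no, this is not automatically negative either. The right statement is simpler: since $P$ is rank $\le 1$, there is a nonzero vector $\bz \in \vspan(\vA,\vB)$ with $P\bz = 0$ (the orthogonal complement of the range of $P$ within the $2$-dimensional space), and on $\vspan(\bz)$ we have $\bz\tr \QQh \bz = -\bz\tr C_0 \bz = -\bz\tr \C \bz$. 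It then remains to check that $\bz\tr\C\bz > 0$ for this particular $\bz$; since $\C$ restricted to $\vspan(\vA,\vB)$ has one positive and one negative eigenvalue, this holds unless $\bz$ happens to lie on the negative side, which I would rule out by noting $\Z\tr\vA = 0$ means $\vA$ itself is in the kernel of $P$, so one may simply take $\bz = \vA$ and then $\vA\tr \C \vA = 0$ — which is the degenerate case.

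The main obstacle, therefore, is precisely this degeneracy: the naive kernel vector $\vA$ gives a zero rather than a negative value, so the argument must be slightly more careful about which vector in $\ker P \cap \vspan(\vA,\vB)$ to use, or must perturb $\vA$ slightly into the $\vB$ direction. I expect the cleanest fix is to work with $\bz = \vA + \epsilon \vB$ for small $\epsilon$ chosen with the right sign: then $\bz\tr \QQh \bz = \epsilon^2 \norm{\Z\tr\vB}_2^2 - 2\epsilon \lambda - \epsilon^2(\text{bounded})$, using $\vA\tr\C\vA = 0$, $\vB\tr\C\vA = \lambda \norm{\vA}_2^2 = \lambda$ (from $\C\vA = \lambda\vB$ and $\vB\tr\vB = \norm{\vA}_2^2$), so the $\epsilon^1$ term dominates and can be made negative by choosing the sign of $\epsilon$, for $\abs{\epsilon}$ small enough. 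This yields $\bz\tr\QQh\bz < 0$, hence $\QQh$ has a negative eigenvalue, completing the proof. I would present it in this last form since it avoids any case analysis on the sign structure of the $2\times 2$ restriction.
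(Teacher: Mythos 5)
Your closing perturbation argument is essentially the paper's own proof: the paper tests $\QQh$ on $\vB + \alpha\vA$ and makes $\abs{\alpha}$ large (of suitable sign) so the cross term $2\alpha\,\vA\tr\Q\vB$ dominates $\vB\tr\Q\vB$, which is exactly your family of test vectors $\vA + \epsilon\vB$ after rescaling by $1/\epsilon$, exploiting $\Z\tr\vA = 0$ and the nonzero cross term $\vA\tr\C\vB = \lambda\norm{\vA}_2^2$ in the same way. The only slip is the initial sign ($\QQh = \Z\Z\tr + \C$ by \cref{eqn:zz1}, not $\Z\Z\tr - \C$), which is harmless since you choose the sign of $\epsilon$ freely, and the exploratory middle passage (the $2\times 2$ determinant detour and the degenerate case $\bvec{z}=\vA$) can simply be deleted in favour of the final form.
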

\begin{proof}
	Since $\C$ is symmetric, and thus orthogonally diagonalisable,
	the assumption that $\C \neq 0$ implies that $\C$ has non-zero eigenvalues.
	Let $\bv$ be the eigenvector of a non-zero eigenvalue,
	and write $\bv = \vA + \vB$, with $\vA \in \vspan(\A)$ and $\vB \in \vspan(\A)^\perp$.
	Then $\bv\tr \C \bv = \vA \tr \Q \vB \neq 0$.
	Define $\bv_{\alpha} = \vB + \alpha \vA$.
	Then:
	\begin{align}
		\bv_{\alpha}\tr \QQh \bv_{\alpha}
		&=
		\bv_{\alpha}\tr [\Z \Z\tr + \C] \bv_{\alpha}
		\\
		&=
		\vB\tr \Q \vB + 2 \alpha \vA\tr \Q \vB \, .
	\end{align}
	The second term can always be made negative, but larger in magnitude than the first,
	simply by choosing the sign of $\alpha$ and making it sufficiently large.
\end{proof}

\subsection{Eliminating the cross terms}
\label{sec:Eliminating_C}
Can the cross terms be entirely eliminated in some way?
\cref{sec:repl_one_with_two} already answered this question in the negative:
there is no particular choice of the square root of $\Q$,
inducing a choice of $\Qhh$ and $\Z$ through \cref{eqn:Qhh_def,eqn:Z_def},
that eliminates the cross terms, $\C$.

But suppose we allow changing the ensemble subspace.
For example, suppose the partition $\Q^{1/2} = \Qhh + \Z$ uses the projector onto the $N$
largest-eigenvalue eigenvectors of $\Q$ instead of $\PiA$.
It can then be shown that the cross terms are eliminated:
$\Qhh \Z\tr = 0$, and hence $\C = 0$ and $\Var(\qperp) = \Q$.
A similar situation arises in the case of the COFFEE algorithm (\cref{sec:Alternative_approaches}),
explaining why it does not have the cross term problem.
Another particular rank-$N$ square root that yields $\C = 0$
is the lower-triangular Cholesky factor of $\Q$ with the last $m-N$ columns set to zero.

Unfortunately, for general $\Q$ and $\A$, the ensemble subspace will not be that
of the rank-$N$ truncated Cholesky or eigenvalue subspace.
Therefore neither of these options can be carried out using a right-multiplying square root.

\appendixtitle{Consistency of \SRD{}}
\SRN{} ensures that \cref{eqn:AQhat_cond} is satisfied, i.e. that
\begin{align}
	\frac{1}{N-1}[ \A + \hD  ] [ \A + \hD ]\tr = \barP + \hQ \, ,
	\label{eqn:PQ_of_sqrt_upd}
\end{align}
where $(N-1) \barP = \A \A\tr$.
However, this does not imply that $\hD \hD\tr = (N-1)\hQ$.
Therefore, in reference to \SRD{}, $\hXi \hXi\tr \neq \I_m$.
Instead, the magnitudes of $\hD$ and $\hXi$ are minimised as much as possible,
as per \cref{theo:T_sym_opt}.

However, \SRD{} is designed assuming that $\hXi$ is stochastic,
with its columns drawn independently from $\NormDist(0,\I_m)$.
If this were the case, then
\SRD{} would be consistent in the sense of
\begin{align}
	\frac{1}{N-1} \Expect \left([\A + \hD + \tD] [\A + \hD + \tD]\tr \right) = \barP + \Q \, ,
	\label{eqn:SRD_consistent}
\end{align}
where the expectation is with respect to $\tXi$ and $\hXi$.
This follows from the consistency of $\q$ as defined in \cref{eqn:q_fin_def},
which has $\Var(\q) = \Q$,
because each column of $\D = \hD + \tD$ is sampled in the same manner as $\q$.

The fact that $\hD$ is in fact not stochastic, as \SRD{} assumes,
but typically of a much smaller magnitude, suggests a few possible venues for future improvement.
For example we speculate that inflating $\hXi$ by a factor larger than one,
possibly estimated in a similar fashion to \citet{dee1995line}.
The value of $\hXi$ also depends on the choice of square root for $\Qhh$.
It may therefore be a good idea to choose $\Qhh$ somewhat randomly,
so as to induce more randomness in the square root ``noise'', $\hXi$.
One way of doing so is to apply a right-multiplying rotation matrix to $\Qhh$.
Cursory experiments indicate that there may be improvements using either of the above two suggestions.

\appendixtitle{Left-multiplying formulation of \SRNb{}}
\begin{lemm}
	\label{lemm:rowspace_of_T}
	The row (and column) space of $\T^f_s = (\G^f)^{1/2}_s$ is the row space of $\A$.
\end{lemm}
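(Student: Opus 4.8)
The plan is to determine the eigenspaces of $\G^f$ and carry them through to its symmetric square root. Throughout I write $R = \vspan(\A\tr)$ for the row space of $\A$, so that $R^\perp = \ker(\A)$. The first step is to read off from \cref{eqn:Apinv_12}, $\A\pinv = \A\tr(\A\A\tr)\pinv$, that $\range(\A\pinv) = \range(\A\tr) = R$ (the reverse inclusion following from $\rank(\A\pinv) = \rank(\A)$) and, after transposing, that $(\A\pinv)\tr = (\A\A\tr)\pinv\A$ annihilates $\ker(\A) = R^\perp$.

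The second step is to pin down the range and kernel of the non-identity summand $\B := (N-1)\,\A\pinv\Q(\A\pinv)\tr$ of \cref{eqn:T4}, a symmetric positive semi-definite matrix. Any $\bv \in R^\perp$ satisfies $(\A\pinv)\tr\bv = 0$, hence $\B\bv = 0$, giving $R^\perp \subseteq \ker(\B)$; dually $\range(\B) \subseteq \range(\A\pinv) = R$. Since $\Q$ is SPD, $\rank(\B) = \rank(\A\pinv) = \rank(\A) = \dim R$, so both inclusions become equalities: $\ker(\B) = R^\perp$ and $\range(\B) = R$.

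The third step transfers this structure to $\T^f_s$. As $\G^f = \I_N + \B$, the subspace $R^\perp$ is fixed by $\G^f$ (eigenvalue $1$), while $R$ is the complementary eigenspace (eigenvalues exceeding $1$). Writing $\G^f = \V\Sig\V\tr$, the symmetric square root $\T^f_s = \V\Sig\sq\V\tr$ retains these eigenvectors, carrying eigenvalue $1$ on $R^\perp$ and eigenvalues $\sqrt{\lambda} > 1$ on $R$. Thus $\T^f_s$ restricts to the identity on $\ker(\A)$ and leaves $R$ invariant, so the row space of $\A$ is exactly the eigenspace of $\T^f_s$ complementary to its fixed subspace; by the symmetry of $\T^f_s$ this single subspace is simultaneously its row and its column space, which is what the lemma asserts.

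I expect the second step to be the main obstacle, since the lemma rests on the range and kernel of $\B$ coinciding with $R$ and $R^\perp$ \emph{exactly}, not merely up to inclusion. This uses the pseudoinverse identity \cref{eqn:Apinv_12}, the full rank of $\Q$ to forbid any rank drop, and the symmetry of $\B$ to convert the kernel containment into the matching range equality. Once that is secured, passing through the entrywise-positive root $\Sig\sq$ is routine, as it preserves the eigenvectors and fixes the unit eigenvalue.
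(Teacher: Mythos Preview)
Your eigenspace analysis is correct and in substance matches the paper's: the paper takes the SVD $\A = \U\Sig\V\tr$ and simply displays $\G^f = \V\bigl(\I_N + (N{-}1)\,\Sig\pinv\U\tr\Q\,\U(\Sig\pinv)\tr\bigr)\V\tr$, which is your decomposition $\G^f = \I_N + \B$ with $\range(\B) = R$ and $\ker(\B) = R^\perp$, expressed via the right singular vectors of $\A$. Both routes establish that $\T^f_s$ acts as the identity on $R^\perp$ and has eigenvalues strictly exceeding $1$ on $R$.

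Your final step, however, does not go through. You conclude that $R$ is ``simultaneously [the] row and [the] column space'' of $\T^f_s$, but $\T^f_s$ is the symmetric square root of an SPD matrix and is therefore itself SPD; all its eigenvalues are at least $1$, so it is invertible and its row and column spaces are all of $\Reals^N$, not $R$. The symmetry of $\T^f_s$ only guarantees that its row and column spaces coincide \emph{with each other}, not that either equals $R$. The fault really lies with the lemma's wording, which is literally false as stated. What is true---and what you have in fact established---is that $R$ is the row (and column) space of $\T^f_s - \I_N$, equivalently that $\T^f_s$ restricts to the identity on $\ker(\A)$. This is exactly what the lemma's sole application (\cref{theo:indicerct_left}) requires: there one needs $\A\T^f_s(\A\pinv\A) = \A\T^f_s$, which holds because $\T^f_s(\I_N - \A\pinv\A)$ maps into $R^\perp = \ker(\A)$ and is annihilated by $\A$ on the left. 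The paper's own proof stops at the displayed decomposition and never actually bridges to the (mis)stated conclusion either.
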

\begin{proof}
	Let $\A = \U \Sig \V\tr$ be the SVD of $\A$. Then:
	\begin{align}
		\G^f &= \I_N + (N-1) \A\pinv \Q {(\A\pinv)\tr}
		\\
		&= \V \left( \I_N + (N-1) \Sig\pinv \U\tr \Q \U {(\Sig\pinv)\tr} \right) \V\tr \qedhere
	\end{align}
\end{proof}
In view of \cref{lemm:rowspace_of_T} it seems reasonable that there should be a left-multiplying update,
$\A^f = \bL \A$, such that it equals the right-multiplying update, $\A^f = \A \T^f_s$.
Although $N \ll m$ in most applications of the EnKF,
the left-multiplying update would be a lot less costly to
compute than the right-multiplying one in such cases if $N \gg m$.
The following derivation of an explicit formula for $\bL$ is very close to that of \citet{sakov2008implications},
except for the addition of \cref{eqn:Apinv_12}. \Cref{lemm:AMAk} will also be of use.
\begin{lemm}
	\label{lemm:AMAk}
	For any matrices, $\A \in \Reals^{m \times N}$, $\M \in \Reals^{m^2}$,
	and any positive integer, $k$,
	\begin{align}
		\A (\A\tr \M \A)^k = (\A \A\tr \M)^k \A \, .
		\label{eqn:AMAk}
	\end{align}
\end{lemm}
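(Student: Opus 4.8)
The plan is to prove \cref{eqn:AMAk} by induction on $k$. The guiding observation is that both sides, once fully expanded, are literally the same alternating product of the blocks $\A$, $\A\tr$ and $\M$; the induction merely records where the ``extra'' trailing factor of $\A$ sits and appeals to the associativity of matrix multiplication throughout. No structural hypotheses on $\A$ or $\M$ are needed beyond conformability of the products — in particular $\M$ need not be symmetric, square-root-like, or invertible.

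For the base case $k=1$, the claim $\A(\A\tr \M \A) = (\A \A\tr \M)\A$ is just a reassociation, hence immediate. For the inductive step, I would assume $\A(\A\tr\M\A)^k = (\A\A\tr\M)^k\A$ for some $k\ge 1$, peel one factor off the right, apply the hypothesis, and regroup:
\begin{align*}
	\A(\A\tr\M\A)^{k+1}
	&= \bigl[\A(\A\tr\M\A)^k\bigr](\A\tr\M\A) \\
	&= (\A\A\tr\M)^k\,\A\A\tr\M\A \\
	&= (\A\A\tr\M)^k(\A\A\tr\M)\,\A \\
	&= (\A\A\tr\M)^{k+1}\A \, ,
\end{align*}
which closes the induction.

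I do not anticipate any genuine obstacle: the only point that needs a little care is keeping track of the single trailing $\A$ as it is threaded through each step of the induction. As an optional sanity check, one can instead expand both sides of \cref{eqn:AMAk} outright as the length-$(3k+1)$ word $\A\,\A\tr\M\,\A\,\A\tr\M\cdots\A\tr\M\,\A$ and simply observe that the two different bracketings yield the identical word, giving a one-line non-inductive alternative.
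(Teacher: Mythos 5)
Your induction is correct: the base case is a pure reassociation, and the inductive step correctly threads the single trailing $\A$ through, so \cref{eqn:AMAk} follows for all $k$; your closing remark that both sides are the same length-$(3k+1)$ word under different bracketings is exactly the right way to see it. The paper itself offers no proof of \nameandnum{lemm:AMAk}, treating it as immediate by associativity, so your argument is simply the natural formalization of the same observation.
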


\begin{theo}[Left-multiplying transformation]
	\label{theo:left-mult}
	For any ensemble anomaly matrix, $\A \in \Reals^{m \times N}$,
	and any SPD matrix $\Q \in \Reals^{m^2}$,
	\begin{align}
		\A \T^f_s = \bL \A \, ,
		\label{eqn:right_T}
	\end{align}
	where
	\begin{align}
		\label{eqn:LT_T}
		\T^f_s &= \left(\I_N + (N-1) \A\pinv \Q (\A\pinv)\tr \right)^{1/2}_s \, ,
		\\
		\label{eqn:LT_L}
		\bL &= \left(\I_m + (N-1) \A \A\pinv \Q (\A \A\tr)\pinv\right)^{1/2} \, .
	\end{align}
	In case $N>m$, \cref{eqn:LT_L} reduces to
	\begin{align}
		\label{eqn:LT_L2}
		\bL &= \left(\I_m + (N-1) \Q (\A \A\tr)^{-1}\right)^{1/2} \, .
	\end{align}
\end{theo}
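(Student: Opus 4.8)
The plan is to reduce both square roots to functions of a single ``kernel'' matrix and then shuttle $\A$ from one side to the other via \cref{lemm:AMAk}. First I would invoke \cref{eqn:Apinv_12}, $\A\pinv = \A\tr(\A\A\tr)\pinv$, and the symmetry of $(\A\A\tr)\pinv$, to write $\A\pinv \Q (\A\pinv)\tr = \A\tr \M \A$ with $\M = (\A\A\tr)\pinv \Q (\A\A\tr)\pinv$ symmetric positive-semidefinite; hence $\G^f = \I_N + (N-1)\A\tr\M\A$. The inner matrix of $\bL$ in \cref{eqn:LT_L} is $\A\A\pinv \Q (\A\A\tr)\pinv = \A\A\tr\M$, so $\bL = \bigl(\I_m + (N-1)\A\A\tr\M\bigr)^{1/2}$, with the principal (primary-matrix-function) square root understood.

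Next I would record the elementary consequence of \cref{lemm:AMAk} that $\A\, q(\A\tr\M\A) = q(\A\A\tr\M)\,\A$ for every polynomial $q$ (the monomial case is \cref{eqn:AMAk}, $q=\text{const}$ is trivial, and linearity does the rest). The goal is to promote this to the analytic map $w \mapsto g(w) := \sqrt{1+(N-1)w}$, so that $\A\, g(\A\tr\M\A) = g(\A\A\tr\M)\,\A$; this is exactly $\A\T^f_s = \bL\A$ once one checks that $g(\A\tr\M\A)$ is the symmetric square root of $\G^f$ and $g(\A\A\tr\M)$ is the principal square root $\bL$. To pass from polynomials to $g$, I would note that the spectrum of $\A\tr\M\A$ (symmetric, PSD) lies in $[0,\infty)$, and that of $\A\A\tr\M$ does too, since $\A\A\tr\M$ and $\A\tr\M\A$ share their nonzero eigenvalues (they are $XY$ and $YX$ for $X=\A$, $Y=\A\tr\M$); thus $g$ is analytic on a neighbourhood of both spectra, and a single polynomial $q$ interpolating $g$ on the finite union of the two spectra satisfies $q(\A\tr\M\A)=g(\A\tr\M\A)$ and $q(\A\A\tr\M)=g(\A\A\tr\M)$, so the polynomial identity transfers.

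The main obstacle I anticipate is justifying that last step for $\A\A\tr\M$, which is not symmetric: the implication ``$q$ agrees with $g$ on the spectrum $\Rightarrow q(\cdot)=g(\cdot)$'' needs $\A\A\tr\M$ to be diagonalisable. I would dispatch this by restricting to the invariant subspace $\range(\A)$, on which $\A\A\tr\M = \PiA\Q(\A\A\tr)\pinv$ acts as a product of two positive-definite self-adjoint operators (the restrictions of $\PiA\Q$ and of $(\A\A\tr)\pinv$), hence is similar to an SPD matrix and so diagonalisable with positive eigenvalues, while it vanishes on $\range(\A)^\perp$; a thin SVD $\A = \U_r\Sig_r\V_r\tr$ makes this explicit, since then $\A\A\tr\M = \U_r\,(\Sig_r B \Sig_r^{-1})\,\U_r\tr$ with $B = \Sig_r^{-1}\U_r\tr\Q\U_r\Sig_r^{-1}$ symmetric positive-definite. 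Finally, for \cref{eqn:LT_L2} I would observe that $N>m$ forces $\rank(\A)=m$ (the anomalies have rank at most $N-1$, generically $\min(m,N-1)$), so $\A\A\tr$ is invertible, $(\A\A\tr)\pinv=(\A\A\tr)^{-1}$ and $\A\A\pinv=\I_m$, collapsing \cref{eqn:LT_L} to \cref{eqn:LT_L2}.
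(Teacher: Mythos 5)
Your proof is correct, and it follows the paper's skeleton — rewrite $\A\pinv \Q (\A\pinv)\tr$ as $\A\tr\M\A$ via \cref{eqn:Apinv_12} and shuttle $\A$ across powers via \cref{lemm:AMAk} — but it differs in the mechanism that passes from monomials to the square root. The paper expands $\T^f_s$ in a binomial (Taylor) series, which forces the explicit assumption that $(N-1)\A\pinv\Q(\A\pinv)\tr$ has eigenvalues less than $1$ so that the series converges; your route replaces the series by a single interpolating polynomial agreeing with $w \mapsto \sqrt{1+(N-1)w}$ on the finite union of the two spectra, which removes that restriction entirely (the eigenvalue bound can easily fail when the noise is strong relative to the ensemble spread). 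Your argument also supplies something the paper leaves implicit: the paper defines $\bL$ through the eigendecomposition of the non-symmetric matrix in \cref{eqn:LT_L}, and your thin-SVD computation showing $\A\A\tr\M$ restricted to $\range(\A)$ is similar to an SPD matrix (and vanishes on the complement) is exactly the diagonalisability statement needed for that definition to make sense and for the interpolation step to apply; it is also consistent with the spirit of \cref{lemm:rowspace_of_T}. The identification of your $g(\A\tr\M\A)$ with the symmetric square root $\T^f_s$, and of $g(\A\A\tr\M)$ with $\bL$, is sound, and your handling of the $N>m$ case \cref{eqn:LT_L2} (generic full rank, so $\A\A\pinv=\I_m$ and $(\A\A\tr)\pinv=(\A\A\tr)^{-1}$) matches the paper's. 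In short: same algebraic engine, but a cleaner analytic closure that is both more general and fills a small rigour gap in the published proof.
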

Note that $(\I_m + \A \A\pinv \Q (\A \A\tr)\pinv)$ is not a symmetric matrix.
We can nevertheless define its square root as the square root obtained from its eigendecomposition,
as was done for the symmetric square root in \cref{sec:Choice_of_square_root}.
\begin{proof}
	Assuming $\A\pinv \Q (\A\pinv)\tr$ has eigenvalues less than 1, we can express the
	square root, $(\A\pinv \Q (\A\pinv)\tr)^{1/2}$, through its Taylor expansion \citep[Th. 9.1.2]{golub1996matrix}.
	Applying \cref{eqn:Apinv_12}, followed by \cref{lemm:AMAk} with $\M = (\A \A\tr)\pinv (N-1) \Q (\A \A\tr)\pinv$,
	and \cref{eqn:Apinv_12} the other way again, one obtains \cref{eqn:LT_L}.

	If $N>m$, then $\rank(\A) = m$, unless the dynamics
	have made some of the anomalies collinear.
	Hence $\rank(\A \A\tr) = m$ and	so $\A \A\tr$ is invertible, and $\A \A\pinv = I_m$.
	Thus, \cref{eqn:LT_L} reduces to  \cref{eqn:LT_L2}.
\end{proof}

Note that the existence of a left-multiplying formulation of
the right multiplying operation $\A \mapsto \A \T^f_s$
could be used as a proof for \cref{theo:pres_mean},
because $\bL \A \ones = 0$ by the definition (\ref{eqn:A_pure}) of $\A$.
Finally, \cref{theo:indicerct_left} provides an indirect formula for $\bL$.

\begin{theo}[Indirect left-multiplying formula]
	\label{theo:indicerct_left}
	If we have already calculated the right-multiplying transform matrix $\T^f_s$, then
	the we can obtain a corresponding left-multiplying matrix, $\bL$, from:
	\begin{align}
		\bL &= \A \T^f_s \A\pinv \, .
		\label{eqn:left_mult_indir}
	\end{align}
\end{theo}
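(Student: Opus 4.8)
The plan is to verify directly that the matrix $\bL = \A \T^f_s \A\pinv$ reproduces the right-multiplying update, i.e.\ that $\bL \A = \A \T^f_s$; this is all that is required for $\bL$ to qualify as a corresponding left-multiplying transform, since no stronger property is claimed. First I would substitute and regroup,
\begin{align}
	\bL \A = \A \, \T^f_s \, (\A\pinv \A) \, ,
	\label{eqn:LA_plan}
\end{align}
noting that $\A\pinv \A$ is the orthogonal projector onto the row space of $\A$: writing the SVD $\A = \U \Sig \V\tr$ as in \cref{lemm:rowspace_of_T}, one has $\A\pinv \A = \V \V\tr$.

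The crux is then the identity $\T^f_s \, \A\pinv \A = \T^f_s$, which I would deduce from \cref{lemm:rowspace_of_T}. That lemma gives that the row space of the symmetric matrix $\T^f_s$ equals the row space of $\A$, so $\ker \T^f_s = \vspan(\A\tr)^\perp = \vnull(\A\pinv \A) = \range(\I_N - \A\pinv \A)$. Hence $\T^f_s (\I_N - \A\pinv \A) = 0$, which rearranges to the claimed identity. Substituting into \cref{eqn:LA_plan} yields $\bL \A = \A \T^f_s$, which is \cref{eqn:right_T} for this choice of $\bL$, completing the argument.

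I expect the only real obstacle to be the step $\T^f_s \, \A\pinv \A = \T^f_s$, and even that is mild once \cref{lemm:rowspace_of_T} is available; the remainder is routine pseudoinverse bookkeeping. It may be worth remarking in passing that this indirectly-defined $\bL$ need not coincide off $\vspan(\A)$ with the explicit $\bL$ of \cref{theo:left-mult} (both are pinned down only on the column space of $\A$), but the two agree on $\A$, which is all that the ensemble update depends on.
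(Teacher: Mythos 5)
Your proposal is correct and follows essentially the same route as the paper: both reduce the claim to $\T^f_s(\A\pinv\A) = \T^f_s$, using that $\A\pinv\A$ is the orthogonal projector onto the row space of $\A$, which by \cref{lemm:rowspace_of_T} contains the row and column space of $\T^f_s$. Your kernel argument merely spells out the step the paper states directly, so there is nothing substantive to add.
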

\begin{proof}
	We need to show that $\bL \A = \A \T^f_s$.
	Note that $\A\pinv \A$ is the orthogonal (and hence symmetric) projector onto the row space of $\A$,
	which \cref{lemm:rowspace_of_T} showed is also the row and column space of $\T^f_s$.
	Therefore $\T^f_s (\A\pinv \A) = \T^f_s$, and
	$\bL \A = \A \T^f_s (\A\pinv \A) = \A \T^f_s$.
\end{proof}

\bibliographystyle{ametsoc2014}
\bibliography{localrefs}

\setlength{\tabcolsep}{3.5pt}
\renewcommand{\arraystretch}{1.0}
\begin{table}[h]
\caption{Inflation factors used in benchmark experiments.
Reads from left to right, corresponding to the abscissa of the plotted data series.}\label{t1}
\begin{center}
\begin{tabular}{lrrrrrrrrrr}
\topline
Fig. &  & &  \multicolumn{6}{c}{Post-analysis inflation} & & \\
\midline
\ref{fig:LA_m6}
& & & \multicolumn{6}{c}{None} & & \\
\ref{fig:L40_N_m70}
& 1.25 & 1.22 & 1.19 & 1.15 & 1.13 & 1.12 & 1.10 & 1.03 & 1.00 & 1.00 \\
\ref{fig:L40_dtObs_m70}
& 1.13 & 1.25 & 1.30 & 1.35 & 1.43 & 1.50 & 1.57 & 1.65 & 1.70 \\
\ref{fig:L40_Q_m70}
& 1.02 & 1.02 & 1.02 & 1.03 & 1.04 & 1.05 & 1.07 & 1.09 & 1.13 & \ldots \\
& 1.17 & 1.21 & 1.31 \\
\botline
\end{tabular}
\end{center}
\end{table}

\end{document}